\begin{document}

\newcommand{\pr}{\operatorname{Pr}}
\newcommand{\sem}[1]{\ensuremath{\llbracket #1 \rrbracket}\xspace}
\newcommand{\sayan}[1]{\textcolor{blue}{#1}}
\title{Towards Unified Probabilistic Verification and Validation of Vision-Based Autonomy}

\titlerunning{Towards Unified Probabilistic Verification and Validation}

\author{Jordan Peper\inst{1}\thanks{The authors thank Alessandro Abate for a discussion of Bayesian validation of Markov models, and Thomas Waite and Radoslav Ivanov for sharing their implementation of the vision-based mountain car. \\This material is based on research sponsored by AFRL/RW under agreement number FA8651-24-1-0007 and the NSF CAREER award CNS-2440920. The U.S. Government is authorized to reproduce and distribute reprints for Governmental purposes notwithstanding any copyright notation thereon. The opinions, findings, views, or conclusions contained herein are those of the authors and should not be interpreted as representing the official policies or endorsements, either expressed or implied, of the DAF, AFRL, NSF, or the U.S. Government.}\and Yan Miao\inst{2} \and Sayan Mitra\inst{2}\and Ivan Ruchkin\inst{1}}

\institute{University of Florida, Gainesville, FL 32611, USA \and University of Illinois at Urbana-Champaign, Champaign, IL 61820, USA}

\maketitle
\vspace{-5mm}

\begin{abstract}
    Precise and comprehensive situational awareness is a critical capability of modern autonomous systems. Deep neural networks that perceive task-critical details from rich sensory signals have become ubiquitous; however, their black-box behavior and sensitivity to environmental uncertainty and distribution shifts make them challenging to verify formally. Abstraction-based verification techniques for vision-based autonomy produce safety guarantees contingent on rigid assumptions, such as bounded errors or known unique distributions. Such overly restrictive and inflexible assumptions limit the validity of the guarantees, especially in diverse and uncertain test-time environments. We propose a methodology that unifies the verification models of perception with their offline validation. Our methodology leverages interval MDPs and provides a flexible end-to-end guarantee that adapts directly to the out-of-distribution test-time conditions. We evaluate our methodology on a synthetic perception Markov chain with well-defined state estimation distributions and a mountain car benchmark. Our findings reveal that we can guarantee tight yet rigorous bounds on overall system safety.\\
    \noindent \textbf{Code: } {\scriptsize \href{https://github.com/Trustworthy-Engineered-Autonomy-Lab/unified-perception-vnv}{github.com/Trustworthy-Engineered-Autonomy-Lab/unified-perception-vnv} }\vspace{-4mm}
    \keywords{Neural perception  \and Safety verification \and Probabilistic validation \and Interval Markov decision process.}
\end{abstract}

\section{Introduction}

\looseness=-1

Deep vision models are used to process rich sensory signals in safety-critical closed-loop systems -- from autonomous driving~\cite{bojarski_end_2016, pasareanu_closed-loop_2023} and robotic surgery~\cite{shvets_automatic_2018} to aircraft control~\cite{dong_deep_2021, dong_visual_2015}. 
Nevertheless, verifying neural-perception systems remains a formidable challenge~\cite{mitra_formal_2025} due to their fragile nature~\cite{szegedy_intriguing_2014,kurakin_adversarial_2017}. The foremost obstacles include accurate modeling of visual signals in a verifiable way~\cite{pasareanu_closed-loop_2023} and devising logical specifications for the neural models~\cite{astorga_perception_2023, huang_safety_2017}. 


{\em Assume-Guarantee (A/G) reasoning} is a standard method for modular verification of complex systems~\cite{blundell_assume-guarantee_2005,sangiovanni-vincentelli_taming_2012}. In the A/G framework, one specifies the assumptions on the system's components and then verifies each component's guarantees under these assumptions, and this chain of reasoning establishes the system-level properties. 
To apply A/G reasoning to vision-based autonomous systems, a typical decomposition would be into  {\em observer} and {\em controller\/} components. The observer process would include the image formation (from the state) as well as state estimation (from one or more images).

A/G reasoning has indeed been applied to vision-based systems with such decompositions. First, as a precursor to modeling visual uncertainties, A/G specifications have been extended to probabilistic uncertainties~\cite{nuzzo_stochastic_2019}. Another A/G-style technique,  called {\em perception contracts}~\cite{tcad-HsiehLSJMM22, astorga_perception_2023}, computes bounds on the accuracy of the observer such that these bounds are sufficient to verify the closed-loop system. Similarly, constructing stochastic assumptions on the observer has enabled the probabilistic verification of vision-based systems in~\cite{cleaveland_monotonic_2022,pasareanu_closed-loop_2023,calinescu_controller_2024,cleaveland_conservative_2025}. However, vision-based observers are well-known to be fragile and vulnerable to distribution shifts~\cite{wu_toward_2023,filos_can_2020}, which can violate the assumptions when the system is deployed in a new environment. Thus, a fundamental gap remains between A/G verification and the validity of those assumptions in the deployed system. 

This paper aims to extend the scope of verification guarantees for vision-based systems by measuring the validity of modeling assumptions. To this end, we introduce a \textit{unified verification \& validation methodology} to obtain a safety guarantee that is flexibly degraded when the assumptions -- in the form of a probabilistic model -- are likely to be invalid in a new environment. The key idea is that a probabilistic safety guarantee for the ground truth system $M_E$ modeled in environment $E$ is contingent on a statistical claim about the model's validity in a new environment $E'$, in the style of the Probably Approximately Correct (PAC) bounds~\cite{fu_probably_2014}. This will be represented with nested probabilistic assertions. 

Our methodology consists of three steps: abstraction, verification, and validation. The abstraction step represents neural perception's uncertainty with confidence intervals, leading to an \textit{interval Markov decision process (IMDP)}  abstraction $\mathcal{M}_E$. This abstraction overapproximates the concrete system $M_E$ with confidence $\alpha$ (i.e., with probability $1-\alpha$, $\mathcal{M}_E$ will contain the behavior distribution of $M_E$). At a high level, $\mathcal{M}_E$ represents a set of possible closed-loop systems with uncertainties in observer and dynamics,  that are consistent with the data obtained from $M_E$. 

The verification step checks a system-level temporal property $\varphi$ on the constructed IMDP $\mathcal{M}_E$ with a probabilistic model checker~\cite{kwiatkowska_prism_2011,hensel_probabilistic_2020}. It produces an upper bound $\beta$ on the chance that a trajectory falsifies property $\varphi$. A combination of the first and second steps gives rise to a frequentist-style guarantee of the form: with confidence $\alpha$ in the dataset from which $\mathcal{M}_E$ was built, the chance that the underlying system $M_E$ produces a safe trajectory is at least $1-\beta$. 
 
The third step is to validate our IMDP abstraction $\mathcal{M}_E$ in a new environment $E'$. Here, we aim to measure the probability that the new concrete model $M_{E'}$ is contained in $\mathcal{M}_E$ -- and thus we pivot to the Bayesian perspective. Instead of formulating a frequentist hypothesis test (which would merely detect a significant difference between $E$ and $E'$), we construct a belief on the \textit{parameters} of $M_{E'}$ based on the new data. Then, ``intersecting" it with the probability intervals in IMDP $\mathcal{M}_E$ gives us a quantitative posterior $1-\gamma$ chance of the new environment $E'$ falling within the uncertainty of $\mathcal{M}_E$. Hence, with confidence $1-\gamma$, the system $M_{E'}$ satisfies the property $\varphi$ with probability $1-\beta$. 
This nested guarantee elegantly combines the two approaches: frequentist and Bayesian.

We evaluate our methodology on two case studies: (1) a synthetic waypoint-following task and (2) a vision-based mountain car system. Results demonstrate that our methodology yields a flexible trade-off between safety within a model and validity of the model in the presence of perceptual uncertainty. Furthermore, our validation procedure reliably discriminates between systems operating within the training distribution and those operating in domain-shifted environments. 

This paper makes three contributions: 
\begin{enumerate}
    \item A unified framework for verifying the safety of vision-based autonomous systems and validating them in a deployment environment.
    \item A flexible nested probabilistic guarantee for both verification and validation. 
    \item An evaluation of our framework in two case studies.  
\end{enumerate}

The rest of this paper is organized as follows. Section~\ref{sec:rw} surveys related work. 
Section~\ref{sec:pf} formulates our problem of making rigorous yet flexible guarantees that unify verification and validation. In Section~\ref{sec:app}, we describe our approach to addressing each subproblem in order, 
Finally, we evaluate this methodology on two case studies in Section~\ref{sec:exp}, and then conclude with Section~\ref{sec:conc}.

\section{Related Work}
\label{sec:rw}

\looseness=-1
\paragraph{Contracts and assume/guarantee reasoning.}
Methodologies for assuring safety-critical software and systems originate from a rich
 tradition of compositionality~\cite{rushby_composing_2012,kwiatkowska_compositional_2013,bakirtzis_compositional_2021-1,ruchkin_confidence_2022}, contracts~\cite{graf_contract-based_2014,ruchkin_contract-based_2014,liebenwein_compositional_2020,incer_hypercontracts_2022}, and assume-guarantee reasoning~\cite{frehse_assume-guarantee_2004,kwiatkowska_assume-guarantee_2010,ruchkin_active:_2014,frenkel_assume_2022}. 
An engineer annotates each component with assumptions on its inputs and guarantees on its outputs; then, the two can be composed if the first component's guarantees satisfy the second's assumptions. 
 Recently, stochastic~\cite{li_stochastic_2017,nuzzo_stochastic_2019,nuzzo_electronic_2019} and neural-network~\cite{naik_robustness_2020,dreossi_compositional_2019} contracts attempt to handle uncertainties in learning components. Unfortunately, they do not offer a complete or straightforward solution to specifying and verifying vision-based systems.

\vspace{-2mm}
\paragraph{Probabilistic modeling and verification.} 
Probabilistic model checking~\cite{katoen_probabilistic_2016,kwiatkowska_probabilistic_2022} is a well-developed technique to check properties of various system models that exhibit probabilistic and non-deterministic transitions~\cite{howard_dynamic_2007,kwiatkowska_stochastic_2007}. The oft-cited scalability issue can be tackled with assume-guarantee reasoning~\cite{kwiatkowska_assume-guarantee_2010,feng_learning-based_2011,komuravelli_assume-guarantee_2012} and model counting~\cite{vazquez-chanlatte_model_2019,holtzen_model_2021}, as well as model reduction~\cite{bharadwaj_reduction_2017}, sampling~\cite{legay2014LSS,DArgenio2015SmartLSS}, and restructuring~\cite{ruchkin_integration_2019,camara_haiq_2020}. There is a vast literature about creating automata-based abstractions from detailed system descriptions~\cite{holtzen_probabilistic_2017,lomuscio_counter_2019,aichernig_probabilistic_2019,xie_mosaic_2023,mallik_compositional_2018}. The uncertainty from data can be accounted for by inferring confidence intervals over model parameters~\cite{cubuktepe_scenario-based_2020,badings_scenario-based_2022,alasmari_quantitative_2022}. This leads us to a class of models with sets of transition probabilities, known under the diverse names of uncertain/imprecise/interval/set/robust Markov models~\cite{puggelli_polynomial-time_2013,dinnocenzo_robust_2012,wolff_robust_2012,troffaes_model_2013,badings_efficient_2023,termine_robust_2021,zhao_bayesian_2024,jackson_formal_2021}. Our approach leverages these insights to construct and verify an interval Markov model of a vision-based closed-loop system. 

\vspace{-2mm}
\paragraph{Probabilistic model validation.} Whether a probabilistic model is valid can be quantified in several ways. If model validity is conceptualized as a stochastic binary event, it can be characterized by a probability estimate~\cite{corso_survey_2022} or confidence interval~\cite{fleiss_statistical_2003,bensalem_what_2024,gupta_distribution-free_2022}. Numerical notions of robustness~\cite{fainekos_robustness_2009,dong_reliability_2023,kwiatkowska_when_2023} and risk~\cite{majumdar_how_2020,chapman_risk-sensitive_2020} offer directional, smooth uncertainty quantification. At run time, model validity can be determined via confidence monitoring~\cite{ruchkin_compositional_2020} or parameter sampling~\cite{carpenter_modelguard_2021}. Our methodology adopts a distributional perspective~\cite{dutta_distributionally_2025,cauchois_robust_2024}: model validity is the inclusion of the true distribution into the modeled set, operationalized via the Bayesian approach to safety validation~\cite{moss_bayesian_2023}.

\vspace{-2mm}
\paragraph{Perception abstractions and contracts.} 
To model vision-based perception, most formal approaches use simple error probabilities and noise parameters~\cite{badithela_evaluation_2022,cleaveland_monotonic_2022,dreossi_verifai_2019,xie_mosaic_2023,wang_bounding_2021}. Other approaches build detailed models of specific vision systems, such as aircraft landing cameras~\cite{santa_cruz_nnlander-verif_2022} or racing car LiDARs~\cite{ivanov_case_2020}. Sadly, such approaches have limited expressiveness and do not generalize to real-world systems. To obtain guarantees, perception contracts~\cite{tcad-HsiehLSJMM22,LiL4DC25, astorga_perception_2023,sun_learning-based_2024} assume a bound on the accuracy of a vision-based state estimator such that these bounds are sufficient for formally verifying closed-loop invariants~\cite{tcad-HsiehLSJMM22} or Lyapunov functions~\cite{LiL4DC25}.
In a similar vein, recent works on model checking vision systems have constructed a verifiable probabilistic model~\cite{cleaveland_monotonic_2022,pasareanu_closed-loop_2023,cleaveland_conservative_2023,cleaveland_conservative_2025}. None of these approaches have been investigated (let alone provide guarantees) when the visual distribution changes from the one for which the model was built. 


\vspace{-2mm}
\section{Problem Formulation}
\label{sec:pf}

In this paper, we address the \emph{composite problem} of (i) building a sound abstraction of perception from sample executions of a vision-based autonomous system, (ii) providing rigorous safety guarantees, and (iii) validating these guarantees in an offline deployment setting. We first introduce our system notation, then transition into our \emph{three primary subproblems}.


Consider the closed-loop discrete-time dynamical system $M_E$ that is composed of black-box sensing and perception processes (hereafter, ``state estimation") operating under the randomness of the latent environment distribution $E$, and subsequently a deterministic, known control loop:
\begin{equation}
\label{eqn:full_model}
M_E:\quad
\left\{
    \begin{aligned}
        &\left.\begin{aligned}
            &\text{Sensing:} && o_t = g(s_t, e_t \sim E) \\
            &\text{Perception:} && \hat{s}_t = h(o_t)
        \end{aligned}\right\}\text{ State estimation}\\[10pt]
        &\left.\begin{aligned}
            &\text{Control policy:} && u_t = \pi (\hat{s}_t) \\
            &\text{Dynamics:} && s_{t+1} = f( s_t, u_t)
        \end{aligned}\right\}\text{ Control loop}
    \end{aligned}
\right.
\end{equation} 

The operational domain of the closed-loop system $M_E$ is deployed in an environment $E$ -- an unknown probability distribution that characterizes the uncertain behavior of a domain. At any time $t$, a single realization $e_t \in \mathcal{E}$ is drawn from the probability distribution $E$. Further, $s_t \in S \subset \mathbb{R}^n$ is the system state (e.g., position, class label); $o_t \in O \subset \mathbb{R}^k$ is the observation of this state and environment (e.g., a camera image or a LiDAR scan); $\hat{s}_t \in \hat{S} \subset \mathbb{R}^n$ is an estimate of the system state; and $u_t \in U \subset \mathbb{R}^m$ is the control action. An observation of the latent environment at a given state is made with the sensor $g : S \times \mathcal{E} \rightarrow O$, and the state estimates are made with the estimator $h : O \rightarrow  \hat{S}$. Once $\hat{s}_t$ is computed from the state estimation process, a control action is determined with the policy $\pi : \hat{S} \rightarrow U$, and the state is updated with the dynamics $f : S \times U \rightarrow S$.

Given system $M_E$ in some environment $E$ (e.g., an autonomous drone or self-driving car), its \emph{execution} is a pair of true and estimated trajectories $\bm{\tau} =\left(\bm{s}, \boldsymbol{\hat{s}}\right) \sim \sem{M_E}$, where operator $\sem{.}$ means the model semantics (in this case, the distribution over trajectories). Each trajectory is based on the trace of \emph{environment realizations} $\boldsymbol{e}$.  In the case of a model $M$ with a fixed environment $E$, $\sem{M_E}$ can be interpreted as the distribution over all possible $\bm{\tau}$ executions of $M_E$. 

\looseness=-1
Furthermore, we will construct an abstraction function $\psi$ such that $\psi(\bm{\tau}) = \bm{\tau}^{\psi}$, where $\bm{\tau}^{\psi}$ is a discrete-space execution of $M_E$. This induces an abstract model $\mathcal{M}_E$. Here, the semantics $\sem{\mathcal{M}_E}$ is a set of distributions over $\bm{\tau}$ resulting from conservative over-approximation. We slightly abuse the notation  $ \sem{\psi\left(M_E\right)} $ to mean the distribution over abstract, discrete-space executions of $M_E$.

With this notation, we describe the three problems solved in this work:

\begin{problem} [Data-driven abstraction] 
\label{prb:abs-prob}
    Suppose we have a real physical system $M_E$, with a known controller $\pi$ and dynamics $f$, from which we collect a dataset of \emph{executions} $\mathcal{D}^{train}$ drawn according to $\sem{M_E}$. Our goal is to use this dataset to derive an abstract model $\mathcal{M}_E$ that satisfies:
    \begin{equation}
        \sem{\psi(M_E)} \in \sem{\mathcal{M}_E} \text{,}
    \end{equation}
\end{problem}

\begin{problem} [Safety verification]
\label{prb:ver-prob}
    Suppose we are given $\mathcal{M}_E$, which is an abstract model of $M_E$, and a co-safe linear temporal logic (LTL) property $\varphi$. The problem is to compute the worst-case logical satisfaction probability of the LTL predicate $\varphi$ on $\mathcal{M}_E$:
    \begin{equation}
    \label{prb:ver}
        \Pr\!^{min}_{ \sem{ \mathcal{M}_E}} \left( \varphi  \right) \ge 1-\beta
    \end{equation}
\end{problem}

\begin{problem} [Design-time validation]
\label{prb:off-val-prob}
    Suppose we are given $\mathcal{M}_E$, which is an abstract model of $M_E$, a novel deployment environment $E'$, a dataset $\mathcal{D}^{val}$ of \emph{executions} $\tau$, where $\tau \sim \sem{M_{E'}}$, and a prior belief $\mathbb{M}^{pr}$ about the model $M_{E'}$ which is a distribution over candidate models $M$. The problem is to obtain the posterior $\mathbb{M}^{po}$ by updating the belief $\mathbb{M}^{pr}$ with $\mathcal{D}^{val}$, and then compute the conformance confidence $\gamma$ that $\sem{M_{E'}}$ is characterized by some distribution in $\mathcal{M}$. Formally:
    \begin{equation}
        \Pr\!_{\mathbb{M}^{po}} \left(   \sem{M} \in \sem{\mathcal{M}_E} \mid M \sim \mathbb{M}^{po} \right) \ge 1-\gamma
    \end{equation}

\end{problem}

\section{Approach: Abstraction, Verification, and Validation}
\label{sec:app}

This section describes our three-step methodology illustrated in Figure~\ref{fig:megaflow}: 
\begin{itemize}
    \item Subsection~\ref{sub:abstraction} expands on Problem~\ref{prb:abs-prob} of forming a sound abstraction of a vision-based autonomous system and proposes leveraging an \emph{interval Markov decision process} (IMDP) to model imprecise knowledge about the true distribution of state estimates.
    \item Subsection~\ref{sub:verification} addresses Problem~\ref{prb:ver-prob} of verifying the system abstraction by leveraging well-explored probabilistic model checking techniques.
    \item Subsection~\ref{sub:off-validation} reiterates Problem~\ref{prb:off-val-prob} of probabilistically validating the abstraction under a novel deployment environment and proposes a technique for modeling a posterior distribution over model parameters to measure their similarity to the abstraction parameters.
\end{itemize}

\begin{figure}[htb]
\vspace{-6mm}
\centering
\includegraphics[width=\textwidth]{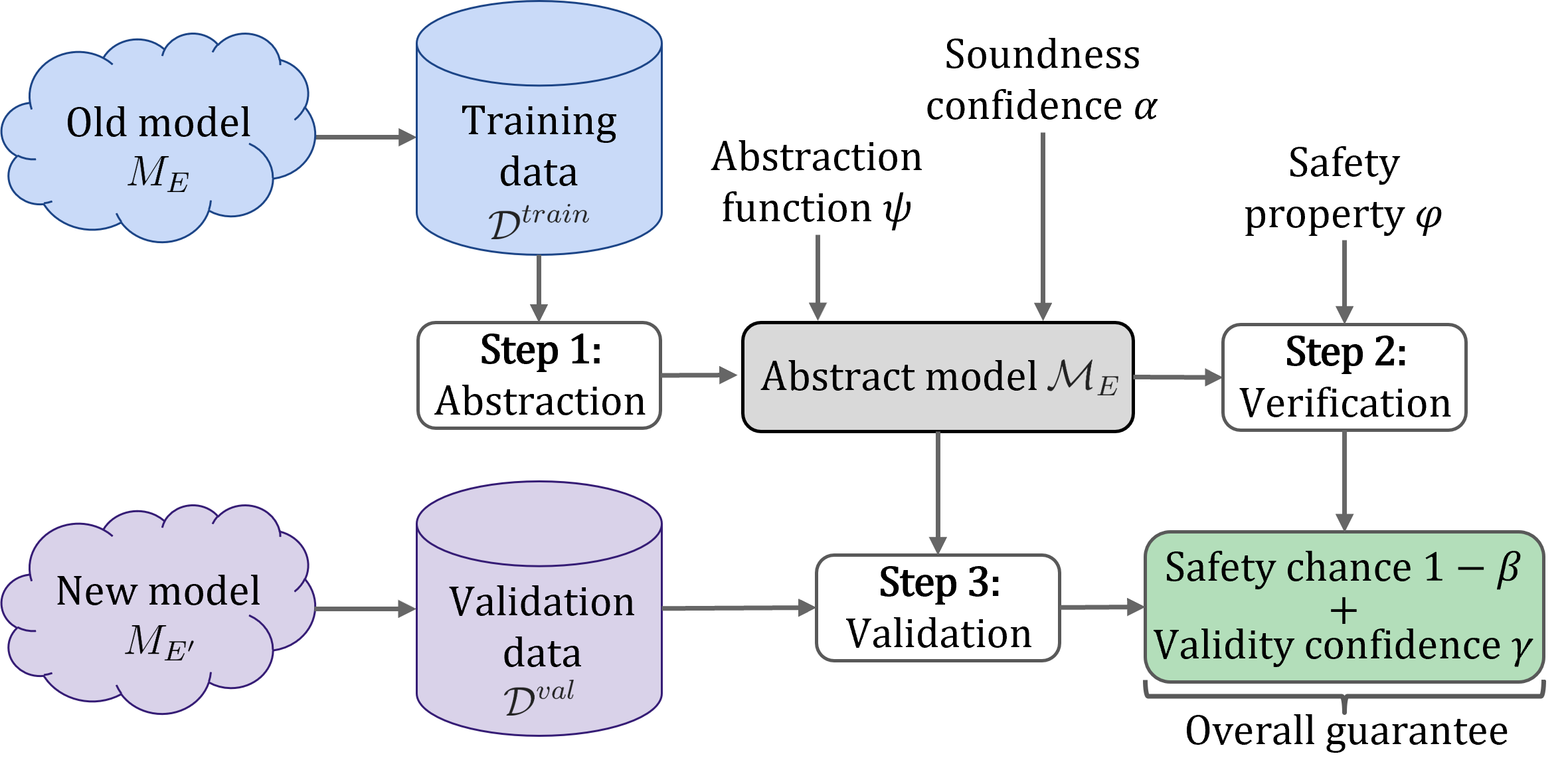}
\vspace{-8mm}
\caption{Our verification \& validation methodology in three steps.}
\vspace{-10mm}
\label{fig:megaflow}
\end{figure}

\paragraph{Running example.} We utilize the OpenAI Gym~\cite{brockman_openai_2016} MountainCar-v0 benchmark (a widely adopted testbed for reinforcement learning) as a running example to clarify the algorithms throughout this section. This environment models a two-dimensional control problem with state variables $s=(x, v) \in [-1.2, 0.6] \times [-0.07,0.07]$, where $x$ is the horizontal position of the vehicle and $v$ is its velocity. At each timestep, the agent selects from actions $u \in \{0, 1, 2\}$, corresponding respectively to the application of a negative force (throttle leftward), no force, or a positive force (throttle rightward).

The objective of the Mountain Car is to reach the top of a steep incline by gaining momentum between the hills. The system is equipped with a pre-trained vision module that predicts the $x$-position of the car from third-person scene images. These predictions, alongside the ground-truth velocity $v$, are fed into a pre-trained Deep Q-Network (DQN) control policy $\pi$ to obtain control action $u$.

\subsection{Constructing abstraction from data}
\label{sub:abstraction}

\looseness=-1
Recall the real-world closed-loop, discrete-time dynamical system $M_E$ introduced in Equation~\ref{eqn:full_model}, which consists of a stochastic perception pipeline $h \circ g$ for state estimation --- where $g$ represents the observation process and $h$ denotes the perception or inference model --- governed by the randomness induced by a latent environment distribution $E$. This is followed by a known, deterministic control policy and dynamics composition $f \circ \pi$, where $\pi$ is the policy mapping state estimates to control actions, and $f$ propagates the system state forward in time.

An execution $\tau$ of $M_E$ is drawn by sampling $e$ from the operational environment $E$, making an observation of this environment $g : S \times \mathcal{E} \rightarrow O$, estimating the manifest state variables $h : O \rightarrow  \hat{S}$, computing a control action $\pi : \hat{S} \rightarrow U$, and propagating the state $f : S \times U \rightarrow S$. Fundamentally, $\tau$ is a random variable which follows the unknown distribution of trajectories $\sem{M_E}$ induced by $E$. This formulation enables the precise definition of a sound abstraction in Definition~\ref{def:sound-abstraction}.

\begin{definition}[Sound abstraction]
\label{def:sound-abstraction}
    Given the system $M_E$ composed of deterministic components $g$, $h$, $\pi$, and $f$, subject to stochasticity from the latent environment distribution $E$, and an abstraction function $\psi: \mathbb{R}^n \rightarrow \mathbb{Z}^n$, a \emph{sound abstraction} $\mathcal{M}_E$ is an abstract model that satisfies the inclusion:
    $$\sem{\psi(M_E)} \in \sem{\mathcal{M}_E}$$
\end{definition}

We aim to construct a \emph{sound abstraction} of $M_E$ that over-approximates its behavior, leading to conservative safety estimates in the next subsection. The abstraction problem is intentionally formulated in a distribution-agnostic manner, i.e., the parameters of $\sem{M_E}$ are unknown, so we must construct $\mathcal{M}_E$ from a dataset $\mathcal{D}^{train}$ of $N$ trajectories $\tau$. The \emph{data-driven abstraction problem} becomes constructing a \emph{statistically sound abstraction} per Definition~\ref{def:stat-sound-absteaction} since $\mathcal{D}^{train}$ is a finite set of executions.

\begin{definition}[Statistically sound abstraction]
\label{def:stat-sound-absteaction}
    Given the system $M_E$ composed of deterministic components $g$, $h$, $\pi$, and $f$, subject to stochasticity from the latent environment distribution $E$, and an abstraction function $\psi: \mathbb{R}^n \rightarrow \mathbb{Z}^n$, a dataset $\mathcal{D}^{train}$ of trajectories $\tau$ sampled i.i.d from $\sem{M_E}$, and a confidence level $\alpha$, a \emph{statistically sound abstraction} $\mathcal{M}_E$ is an abstract model that satisfies:
    $$\Pr\!_{\mathcal{D}^{train}} \left ( \sem{\psi(M_E)} \in \sem{\mathcal{M}_E} \right ) \ge 1-\alpha$$
\end{definition}

We first demonstrate how the real system $M_E$ can be soundly transformed into a probabilistic automaton, then reiterate that its unknown transition probabilities must be statistically inferred from data.

Our approach considers state estimation and state propagation within $M_E$ to be separable, yet sequentially-composed processes. In the concrete spaces $S$ and $\hat{S}$, $g \circ h :  S \times \mathcal{E} \times \hat{S} \rightarrow [0, 1]$ is a probabilistic mapping, and $f \circ \pi : \hat{S} \rightarrow S$ is an injective mapping. This behavior can be represented as a simple Markov chain, as in Definition~\ref{def:mc} below, where the state space $X = S \times \hat{S}$ consists of pairs of states and estimates, and $T : (S \times \hat{S}) \times (S \times \hat{S}) \rightarrow [0, 1]$ is induced by the latent environment distribution $E$:

\begin{definition} [Markov Chain]
\label{def:mc}
    A \emph{Markov chain} is a triple $(X, x_0, T)$, where $X$ is the state space, $x_0 \in X$ is the initial state, and $T : X \times X \rightarrow [0, 1]$ is the probabilistic transition function.
\end{definition}

However, 
model checking continuous-state Markov processes is challenging with current tools. One common workaround is to discretize the state space of the Markov chain, yielding a countable set of states and transitions. Let $\psi : \mathbb{R}^n \rightarrow \mathbb{N}^n$ be an abstraction function that maps the otherwise continuous space $S$ into a countably infinite space $S^{\psi}$ (and $\hat{S} \rightarrow \hat{S}^{\psi}$) of $n$-dimensional hypercubes.

In the abstract spaces $S^{\psi}$ and $\hat{S}^{\psi}$, $g \circ h :  S^{\psi} \times \mathcal{E} \times \hat{S}^{\psi} \rightarrow [0, 1]$ is still a probabilistic mapping, but $f \circ \pi : \hat{S} \twoheadrightarrow S$ becomes a surjective mapping to over-approximate the model behavior. From this, we construct a discrete state Markov decision process, seen in Definition~\ref{def:mdp} below, where $X=S^{\psi} \times \hat{S}^{\psi}$.

\begin{definition} [Markov Decision Process]
\label{def:mdp}
    A \emph{Markov decision process} is a tuple $(X, x_0, \omega, \delta, T)$, where $X$ is the state space, $x_0 \in X$ is the initial state, $\omega$ is an alphabet of action labels, $\delta: X \rightarrow \omega$ is an adversary that resolves non-determinism, and $T : X \times \omega \times  X \rightarrow [0, 1]$ is the probabilistic transition function.
\end{definition}

Recall that we do not actually know the exact probabilistic transition function $T$. However, the information about it is present in our dataset $\mathcal{D}^{train}$. To perform a statistically rigorous approximation of this ground-truth MDP, which is a conservative discrete-state abstraction of the ground-truth MC, we use confidence intervals (CI) in place of single transition probabilities, resulting in an interval Markov decision process outlined in Definition~\ref{def:imdp}, still with $X=S^{\psi} \times \hat{S}^{\psi}$

\begin{definition} [Interval Markov Decision Process]
\label{def:imdp}
    An \emph{interval Markov decision process} is a tuple $(X, x_0, \omega, \delta, \Delta)$, where $X$ is the state space, $x_0 \in X$ is the initial state, $\omega$ is an alphabet of action labels, $\delta: X \rightarrow \omega$ is an adversary that resolves non-determinism, and $\Delta: X \times \delta \times X \rightarrow \left\{   \left[a, b \right] \mid 0 \le a \le b \le 1  \right\}$ is the interval probability transition function.
\end{definition}

The interval probability function $\Delta$ is computed with Algorithm~\ref{alg:conf} such that with $1-\alpha$ confidence, the intervals in $\Delta$ bracket the true transition probabilities in $T$, thereby capturing our uncertain knowledge about the true distribution induced by $E$. In practice, we chose to use the Clopper-Pearson interval, which is the maximally conservative confidence interval for binomial processes. It is referred to as the ``exact'' interval since it offers precise $1-\alpha$ coverage of the binomial density, unlike other CIs that rely on asymptotic assumptions like the central limit theorem~\cite{newcombe_two-sided_1998}.

The discovered interval probability function $\Delta = \texttt{ConfInt}\left(\mathcal{D}^{train}, \psi, \alpha \right)$ determines the parameters of the desired abstraction. In practice, the interval probabilities correspond to a particular abstract estimate $\hat{s}^{\psi}$ at an abstract state $s^{\psi}$. However, we still require an underlying structure that organizes the allowed transitions from one state to another, modeling sequential state estimation  $h \circ g$ and conservatively modeling state propagation $f \circ \pi$. We utilize Algorithm~\ref{alg:sm} to abstract $M_E$ into a codified automaton parameterized by $\Delta$.

\begin{algorithm}[H]
\caption{\texttt{ConfInt}: compute confidence intervals for IMDP $\mathcal{M}_E$}
\begin{algorithmic}[1]
\Require Dataset $\mathcal{D}^{train}$, abstraction function $\psi$, confidence level $\alpha$
\Ensure confidence intervals $\Delta$ indexed by abstract state $s^{\psi}$ and abstract estimate $\hat{s}^{\psi}$
\State Initialize $\Delta \gets \{\}$
\State Let $N \gets |\hat{S}^\psi|$ \label{line:conf0}
\State $\bm{s}^{\psi} \gets [\psi(\bm{s}_i) \mid (\bm{s}_i, \bm{\hat{s}}_i) \in \mathcal{D}^{train}]$ \Comment{Bin the state data} \label{line:conf1}
\State $\bm{\hat{s}^{\psi}} \gets [\psi(\bm{\hat{s}}_i) \mid (\bm{s}_i, \bm{\hat{s}}_i) \in \mathcal{D}^{train}]$  \label{line:conf2}
\ForAll{unique $s^{\psi} \in \bm{s}^{\psi}$} \label{line:conf3}

    \State Let $I \gets \{ i \mid \bm{s}^{\psi}[i] = s^{\psi} \}$ \Comment{Vector indices where $s^{\psi}$ exists}
    \State Let $\hat{S}_{s^{\psi}} \gets \{ \bm{\hat{s}^{\psi}}[i] \mid i \in I_{s^{\psi}} \}$ \Comment{Subset of $\hat{S}^{\psi}$ indexed by $I$} \label{line:conf4}
    
    \ForAll{unique $\hat{s}^{\psi} \in \hat{S}^{\psi}_{s^{\psi}}$} \label{line:conf5}
        \State $n \gets \#\{\hat{s}^{\psi} \text{ in } \hat{S}^{\psi}_{s^{\psi}}\}$
        \State $CI \gets [\operatorname{Beta}^{-1}( \alpha/2N, n , |\hat{S}^{\psi}_{s^{\psi}}| - n + 1), \operatorname{Beta}^{-1}(1-\alpha/2N, n+1, |\hat{S}^{\psi}_{s^{\psi}}|-n)]$
        \State $\Delta[s^{\psi}, \hat{s}^{\psi}] \gets CI$ \label{line:conf6}
    \EndFor 
\EndFor
\State \Return $\Delta$
\end{algorithmic}
\label{alg:conf}
\end{algorithm}

\vspace{-12mm}

\begin{algorithm}[H]
\caption{\texttt{DynStruct}: build the transition structure of IMDP $\mathcal{M}_E$}
\begin{algorithmic}[1]
\Require Control loop $f \circ \pi$, bounded state and estimate spaces $\mathcal{S}$ and $\hat{\mathcal{S}}$, abstraction function $\psi$
\Ensure $\texttt{IMDP}$ mapping abstract states to possible successor abstract states

\State Initialize $\texttt{IMDP} \gets \{\}$

\State Define $\mathcal{S}^{\psi} \gets \psi(S)$ \label{line:sm0}
\State Define $\hat{\mathcal{S}}^{\psi} \gets \psi(\hat{S}) $ \label{line:sm1}

\ForAll{$(s^{\psi}, \hat{s}^{\psi}) \in \mathcal{S}^{\psi} \times \hat{\mathcal{S}}^{\psi}$}

    \State $s_{\text{lb}} \gets \min [\psi^{-1}(s^{\psi})]$, $s_{\text{ub}} \gets \max [\psi^{-1}(s^{\psi})]$ \label{line:sm2}
    \State $\hat{s}_{\text{lb}} \gets \min [\psi^{-1}(\hat{s}^{\psi})]$, $\hat{s}_{\text{ub}} \gets \max [\psi^{-1}(\hat{s}^{\psi})]$ \label{line:sm3}
    
    \State $\Delta s_{\min} \gets \infty$, $\Delta s_{\max} \gets -\infty$ \label{line:sm4}
    \ForAll{$s \in \{s_{\text{lb}}, s_{\text{ub}}\}$, $\hat{s} \in \{\hat{s}_{\text{lb}}, \hat{s}_{\text{ub}}\}$}
        \State $\delta s \gets s - f(s, \pi(\hat{s}))$
        \State Update $\Delta s_{\min}$ and $\Delta s_{\max}$ elementwise
    \EndFor \label{line:sm5}
    \State $s_{\text{next, min}} \gets s_{\text{lb}} + \Delta s_{\min}$
    \State $s_{\text{next, max}} \gets s_{\text{ub}} + \Delta s_{\max}$
    \State $s^{\psi}_{\text{next}} \gets \{  \psi(s_i)  \mid s_i \in [s_{\text{next, min}}, s_{\text{next, max}}] \}$ \label{line:sm6}
    \State $\texttt{current\_state} \gets (s^{\psi}, \hat{s}^{\psi})$
    \State $\texttt{possible\_states} \gets \emptyset$
    
    \ForAll{$s^{\psi} \in s^{\psi}_{\text{next}}$}
        \State Add $s^{\psi}$ to $\texttt{possible\_states}$
    \EndFor
    \State $\texttt{IMDP}[\texttt{current\_state}] \gets \texttt{possible\_states}$
\EndFor
\State \Return $\texttt{IMDP}$
\end{algorithmic}
\label{alg:sm}
\end{algorithm}

\vspace{-10mm}



\paragraph{Running example.} Recall the mountain car state space $s=(x,v) \in [-1.2,0.6]\times[-0.07,0.07]$. We select an abstraction function $\psi$ partitioning the $x$-space by intervals of width $0.05$ and the $v$-space by intervals of width $0.005$, yielding $S^{\psi}=\{-24,\dots,12\}\times\{-14,\dots,14\}$. Similarly, we abstract estimation errors $x - \hat{x} \in [-0.5,0.5]$ into intervals of width $0.1$, obtaining $\hat{S}^{\psi}=\{-5,\dots,5\}$. For instance, a concrete state $s_k=(-0.3,0.06)$ abstracts to $\psi(s_k)=(-6,12)$ and concretizes to the subspace $\psi^{-1}(\psi(s_k)) = [-0.30,-0.25]\times[0.06,0.065]$. The estimation abstraction follows similarly.

Algorithm~\ref{alg:conf} computes intervals $\Delta$ indexed by abstract state-estimate pairs $(s^\psi, \hat{s}^\psi)$ by abstracting samples in $\mathcal{D}^{train}$ (Lines~\ref{line:conf1}--\ref{line:conf2}), counting occurrences per abstract state (Lines~\ref{line:conf3}--\ref{line:conf4}), then applying Beta quantiles to estimate confidence intervals (Lines~\ref{line:conf5}-\ref{line:conf6}). For example, given samples $(0,0,-1),(0,0,0),(0,0,0),$ $(0,0,1)$ at $s^\psi=(0,0)$, relative frequencies are computed by simply dividing their relative occurrence $n_i$ by $N = |\hat{S}^\psi|=11$ ($0.09$, $0.18$, and $0.09$, respectively) and bracketed by confidence intervals discovered with the Beta quantile function.

Algorithm~\ref{alg:sm} constructs an IMDP by concretizing abstract states and estimates (Lines~\ref{line:sm0}--\ref{line:sm3}), iterating over every abstract state-estimate pair in $S^\psi \times \hat{S}^\psi$, such as $(-6,12,0)$, to define concrete subspaces (e.g., $[-0.30,-0.25]\times[0.060,0.065]$ $\times[0.0,0.1]$). The algorithm loops through the corners of these subspaces, computes control actions from estimates using policy $\pi$, and propagates the concrete states through the dynamics $f$ (Lines~\ref{line:sm4}--\ref{line:sm5}). For instance, the corner $(-0.3,0.065,0.0)$ yields a concrete future state by evaluating $u=\pi(0.0)$ and $(x,v)=(-0.3,0.065)$ through $f$. The reachable subspace in one step is then abstracted into discrete successor states (Line~\ref{line:sm6}), updating the IMDP structure accordingly.

\vspace{-2mm}

\subsection{Verifying the safety of abstraction}
\label{sub:verification}

We now address the problem of formally verifying the safety of a vision-based autonomous system. We seek to verify that the model $\mathcal{M}_E$, which is an $\alpha$-sound abstraction of $M_E$ per Definition~\ref{def:stat-sound-absteaction}, satisfies a given co-safe linear temporal logic (LTL) property (a finite-time-horizon liveness property) $\varphi$ with at least $1-\beta$ probability. Formally, we aim to compute the logical satisfaction probability of the LTL predicate $\varphi$ on $\mathcal{M}_E$. The property $\varphi$ is a Boolean predicate that represents a real-world constraint on the temporal behavior of the closed-loop system $M_E$. The predicate is written in LTL, translated from a natural language safety requirement such as ``the car must make an emergency stop within 5 seconds whenever it observes a pedestrian in the road" to $G \big(\text{pedestrian} \implies F_{[0, 5]}$ $[\text{speed}=0] \big)$.

Since the abstraction $\mathcal{M}_E$ of the real system $M_E$ is stochastic due to the latent environment distribution $E$, we are interested in verifying that it is \emph{probabilistically safe} with at least $1-\beta$ probability per Definition~\ref{def:prob-safe}:
\begin{definition} [Probabilistic safety]
\label{def:prob-safe}
    An abstract model $\mathcal{M}_E$ \emph{satisfies} an LTL safety property $\varphi$, denoted $\mathcal{M}_E \vDash \varphi$, with at least $1-\beta$ probability iff:
    $$\Pr\!_{\tau \sim \sem{\mathcal{M}_E}}\left( \tau \vDash \varphi \right) \ge 1-\beta,$$
    where $\tau$ is a random trajectory sampled from $\sem{\mathcal{M}_E}$.
\end{definition}

We approach this problem by leveraging an existing probabilistic model checker on the output of Algorithm~\ref{alg:sm}. Since our derived abstraction $\mathcal{M}_E$ is an interval Markov decision process, tools like $\texttt{PRISM}$~\cite{kwiatkowska_prism_2011} and $\texttt{STORM}$~\cite{hensel_probabilistic_2020} are capable of verifying PCTL properties on IMDPs. These model checkers can evaluate the minimum probability that any execution of the IMDP $\bm{\tau}$ is satisfied on the property LTL $\varphi$. For standard MDP verification, the model checker resolves the nondeterminism by identifying the worst-case scheduler $\delta$; however, for IMDPs, the procedure further requires determining the worst-case probability distributions encoded in the probability interval function $\Delta$.

Since $\mathcal{M}_E$ is a statistically sound abstraction of $M_E$ with at least $1-\alpha$ confidence, and $\mathcal{M}_E$ is probabilistically safe on the property $\varphi$ with at least $1-\beta$ chance, we assert a nested guarantee that ``the trajectories are likely safe assuming that the abstraction is sound'' for the real system $M_E$ in Theorem~\ref{thm:id-safety} below (see Appendix~\ref{app:proof-id} for the proof):

\begin{theorem} [In-distribution soundness and safety]
\label{thm:id-safety}
Let $M_E$ be a concrete system with the state/estimation spaces $S$ and $\hat{S}$, composed of deterministic components $g$, $h$, $\pi$, and $f$, subject to stochasticity from the latent environment distribution $E$, from which originates training data $\mathcal{D}^{train}$ consisting of random trajectories $\tau$ drawn i.i.d. from $\sem{M_E}$; let $\psi$ be an abstraction function; let $\mathcal{M}_E$ be an \emph{$\texttt{IMDP} = \texttt{DynStruct}(\pi, f, S, \hat{S}, \psi)$} parameterized by \emph{$\Delta = \texttt{ConfInt}\left(\mathcal{D}^{train}, \psi, \alpha \right)$}, where $\alpha$ is a soundness confidence level.
    
    If it holds that:

    \vspace{-3mm}
    \begin{itemize}
        \item $\mathcal{M}_E$ is a statistically $\alpha$-sound abstraction of $M_E$ per Def.~\ref{def:stat-sound-absteaction}. 
        \item $\mathcal{M}_E$ is safe for LTL safety property $\varphi$ with probability $1-\beta$ per Def.~\ref{def:prob-safe}. 
    \end{itemize}
    \vspace{-3mm}

    Then:
    $$
    \Pr\!_{\mathcal{D}^{train}} \left[\Pr\!_{\tau \sim \sem{M_E}} \left( \tau \vDash \varphi \right) \ge 1-\beta \right] \ge 1-\alpha
    $$
\end{theorem}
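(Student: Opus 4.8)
The plan is to exploit the fact that the two hypotheses concern two independent sources of randomness --- the draw of the dataset $\mathcal{D}^{train}$ and the draw of a trajectory $\tau$ --- and to bridge them by conditioning on the soundness event. First I would unfold the two hypotheses. Statistical $\alpha$-soundness (Definition~\ref{def:stat-sound-absteaction}) says exactly that the event
$$A \;:=\; \bigl\{\mathcal{D}^{train} : \sem{\psi(M_E)} \in \sem{\mathcal{M}_E}\bigr\}$$
has probability at least $1-\alpha$ over the draw of $\mathcal{D}^{train}$. Probabilistic safety (Definition~\ref{def:prob-safe}), read together with the worst-case semantics of Problem~\ref{prb:ver-prob} in which the checker returns $\Pr^{min}_{\sem{\mathcal{M}_E}}(\varphi)$, must be interpreted as a \emph{universal} guarantee over the set of distributions: for every $D \in \sem{\mathcal{M}_E}$ we have $\Pr_{\tau \sim D}(\tau \vDash \varphi) \ge 1-\beta$. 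Making this reading explicit is the conceptual crux, since $\sem{\mathcal{M}_E}$ is a \emph{set} of distributions, not a single one.

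Next I would reason pointwise on the event $A$. Fix any dataset realization in $A$; then the concrete abstract distribution $\sem{\psi(M_E)}$ is one particular member of $\sem{\mathcal{M}_E}$, so instantiating the universal guarantee at this member gives
$$\Pr_{\tau^\psi \sim \sem{\psi(M_E)}}\bigl(\tau^\psi \vDash \varphi\bigr) \;\ge\; \Pr^{min}_{\sem{\mathcal{M}_E}}(\varphi) \;\ge\; 1-\beta.$$
It remains to transfer this from the abstract distribution back to the concrete one. Interpreting $\tau \vDash \varphi$ for a concrete trajectory as $\psi(\tau) \vDash \varphi$ and recalling that $\sem{\psi(M_E)}$ is by construction the pushforward of $\sem{M_E}$ under $\psi$, the change-of-variables identity yields $\Pr_{\tau \sim \sem{M_E}}(\tau \vDash \varphi) = \Pr_{\tau^\psi \sim \sem{\psi(M_E)}}(\tau^\psi \vDash \varphi) \ge 1-\beta$. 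Hence every realization in $A$ forces the inner inequality $\Pr_{\tau \sim \sem{M_E}}(\tau \vDash \varphi) \ge 1-\beta$.

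Finally I would lift this pointwise implication to a statement about the dataset randomness. Because $A$ is contained in the event $\{\mathcal{D}^{train} : \Pr_{\tau \sim \sem{M_E}}(\tau \vDash \varphi) \ge 1-\beta\}$, monotonicity of probability gives
$$\Pr_{\mathcal{D}^{train}}\Bigl[\Pr_{\tau \sim \sem{M_E}}(\tau \vDash \varphi) \ge 1-\beta\Bigr] \;\ge\; \Pr_{\mathcal{D}^{train}}(A) \;\ge\; 1-\alpha,$$
which is precisely the claimed nested bound.

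I expect the main obstacle to lie in pinning down two interpretive points rather than in any computation. First, the safety hypothesis must genuinely be the universal (worst-case $\Pr^{min}$) statement over $\sem{\mathcal{M}_E}$; without that reading the instantiation step at $\sem{\psi(M_E)}$ would not be licensed. Second, the abstract-to-concrete transfer hinges on how the co-safe LTL property $\varphi$ crosses the discretization: if $\varphi$ is defined on the abstract state space and $\sem{\psi(M_E)}$ is the exact pushforward of $\sem{M_E}$, the step is an identity, but in the general case one needs a standing soundness lemma ensuring that the discretization and the surjective over-approximation of $f \circ \pi$ can only under-report satisfaction of $\varphi$. Once these two points are fixed, the conditioning and the final monotonicity bound are routine.
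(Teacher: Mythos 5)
Your proposal is correct and follows essentially the same route as the paper's own proof: condition on the soundness event $A$, instantiate the worst-case safety guarantee of $\mathcal{M}_E$ at the member distribution $\sem{\psi(M_E)}$, transfer satisfaction back to concrete trajectories via the abstraction, and conclude by monotonicity of probability (the paper phrases this last step as the law of total probability, which is equivalent here). If anything, your treatment is more explicit than the paper's on the two interpretive points --- the universal $\Pr^{min}$ reading of Definition~\ref{def:prob-safe} over the set of distributions, and the pushforward identity linking $\sem{M_E}$ to $\sem{\psi(M_E)}$ --- which the paper's proof invokes only implicitly.
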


\subsection{Validating abstraction on new data}
\label{sub:off-validation}

Although verification yields rigorous guarantees about the system $M_E$, what if the system is deployed to an environment $E'$ that may differ from the original environment $E$? Since the original safety guarantees were derived using the IMDP abstraction $\mathcal{M}_E$, which is parameterized by the set of probability intervals $\Delta = \texttt{ConfInt}\left(\mathcal{D}^{train}, \psi, \alpha \right)$, these guarantees no longer hold meaningful value without reconciling the new domain $E'$ with the old domain $E$. Verifying that the abstract model $\mathcal{M}_E$ still describes the behaviors of the real system $M$ under this novel environment $E'$ is critical to asserting a meaningful end-to-end guarantee free of run-time domain assumptions. 

This section tackles an \emph{offline validation setting} where we collect a novel dataset $\mathcal{D}^{val}$, consisting of $L$ trajectories drawn independently from the new environment's trajectory distribution $\sem{M_{E'}}$. The objective is to infer the posterior distribution $\mathbb{M}^{po}$ of the parameters of $M_{E'}$ by updating a prior distribution $\mathbb{M}^{pr}$ with data, and then compute the confidence that $M_{E'}$ is characterized by some distribution in $\sem{\mathcal{M}}$. Formally, we aim to check whether $\mathcal{M}_E$ is a \emph{statistically valid abstraction} of $M_{E'}$:

\begin{definition} [Statistically valid abstraction]
\label{def:prob-valid}
    Given an abstract model $\mathcal{M}_E$, a validation dataset $\mathcal{D}^{val}$ sampled i.i.d. from $M_{E'}$, and a prior belief $\mathbb{M}^{pr}$ about the parameters of $M_{E'}$, $\mathcal{M}_E$ is a \emph{statistically valid abstraction} of $M_{E'}$ with $\gamma$ confidence if:
    $$\Pr\!_{\mathbb{M}^{po}} \left(   \sem{M} \in \sem{\mathcal{M}_E} \mid M \sim \mathbb{M}^{po} \right) \ge 1-\gamma$$
    where $\mathbb{M}^{po}$ is a posterior belief about the parameters of $M_{E'}$.
\end{definition}

\paragraph{Bayesian validation.} The objective of model validation is to measure the \emph{extent} to which the behaviors of $M_{E'}$ are characterized by the abstract model $\mathcal{M}_E$ through statistical inference. We begin by formulating our null hypothesis:
\begin{equation}
    H_0 := \sem{M_E} \in \sem{\mathcal{M}_E},
\end{equation}
asserting that the distribution of trajectories within the true system is contained within the abstract model's set of trajectory distributions. We then acquire a validation dataset $\mathcal{D}^{val}$.

Two fundamental paradigms exist for testing $H_0$: frequentist and Bayesian. In the frequentist paradigm, $\mathcal{D}^{val}$ is viewed as random data generated under a fixed hypothesis (either true or false). Hypothesis testing involves computing the probability $\Pr(\mathcal{D}^{val} \mid H_0)$ of observing the validation dataset under the assumption $H_0$ is true. This probability is then used to derive a p-value, resulting in a \emph{binary} outcome (reject or fail to reject) based on its comparison to a pre-specified significance threshold, which controls the Type I error rate.

In contrast, the Bayesian paradigm treats $H_0$ as a binary random variable with $\mathcal{D}^{val}$ viewed as fixed evidence. Observing $\mathcal{D}^{val}$ updates prior beliefs about $H_0$ into a posterior distribution via Bayes' theorem:
\begin{equation}
    \Pr(H_0 \mid \mathcal{D}^{val}) \propto \Pr(\mathcal{D}^{val} \mid H_0) \Pr(H_0)
\end{equation}

We quantify the statistical likelihood of $H_0$ by integrating $\Pr(H_0 \mid \mathcal{D}^{val})$ over relevant intervals of the IMDP parameters. Thus, Bayesian validation yields a continuous measure of system-wide conformance rather than a binary one.



Let $\mathbb{M}^{pr}$ denote the prior distribution over the parameters of $M_E$. Since estimating discrete future states probabilistically is a multinomial process, we select $\mathbb{M}^{pr}$ as the Dirichlet distribution, the conjugate prior of the multinomial distribution. Observed evidence from the validation dataset $\mathcal{D}^{val}$ is incorporated by directly updating the parameters of our Dirichlet prior, resulting in a posterior distribution $\mathbb{M}^{po}$ over the parameters of $M_{E'}$. For each state, this Dirichlet distribution is parameterized by concentration parameters $\bm{\alpha} = [\alpha_1, \alpha_2, \dots, \alpha_n]$, where each $\alpha_i$ represents the prior pseudo-count (or relative occurrence) associated with the corresponding discrete state estimate $\hat{s}^{\psi}_i$. These concentration parameters are initialized to $\bm{\alpha} = \bm{1}$ (the uniform prior).


We update our Dirichlet prior $\mathrm{Dir}(\bm{\alpha}_0)$ with the observed estimate counts from the validation set $\mathcal{D}^{val}$ for each discrete state $s^\psi$. To this end, we first define the count of estimates witnessed for each state:
$$n_i = \left|\{\,j : \hat{s}_j^\psi = i,\ s_j^\psi = s^\psi \}\right| \quad\text{for }i=1,\dots,K.$$


The posterior Dirichlet parameters for each $s^\psi$ are computed based on the counts $n_i$ and the prior parameters $\boldsymbol{\alpha}_0 = (\alpha_{0,1}, \dots, \alpha_{0,K})$ as $\alpha_i =\alpha_{0,i} \;+\; n_i$ for all $i \in \{1,\dots,K\}$, or simply $\bm{\alpha} = \bm{\alpha}_0 + \boldsymbol{n}$.

\paragraph{Validity confidence.} The next step is to quantify the conformance between this posterior distribution of the $M_{E'}$ parameters and the previously derived abstraction IMDP $\mathcal{M}_E$. We do this on a state-wise basis: a per‐state conformance confidence from posterior $\mathrm{Dir}\left(\boldsymbol\alpha\right)$ is defined as:  
\begin{equation}
1- \gamma_{s^\psi} =  \int_{\Delta [\cdot , s^\psi]}\mathrm{Dir}\bigl(\mathbf p;\,\boldsymbol\alpha\bigr)d\mathbf{p},
\end{equation}
\looseness=-1
where $\mathbf p \sim \mathrm{Dir}\left(\boldsymbol\alpha\right)$, and $\Delta[\cdot, s^\psi]$ are the intervals over possible state estimates $\hat{s}^\psi$.

In practice, due to the absence of an analytical solution to the above integral, 
we approximate $1-\gamma$ by drawing $N$ independent samples $\{\mathbf{p}^{(i)}\}_{i=1}^N$ from $\mathrm{Dir}(\bm{\alpha})$ and compute it as follows for each abstract state $s^\psi$: 
\begin{equation}
1-\gamma_{s^\psi} = \frac{1}{N} \sum_{i=1}^N \mathbf{1} \left(\mathbf{p}^{(i)} \in \Delta[\cdot, s^{\psi}] \right)
\end{equation}

We put these ideas of state-wise conformance checking together into a validation pipeline seen in Algorithm~\ref{alg:val}, which outputs a set of confidence values $\Gamma = \texttt{Validate} (\Delta, \psi, \mathcal{D}^{val})$ over all discrete states.

\vspace{-4mm}
\begin{algorithm}[htb]
\caption{\texttt{Validate}: compute the conformance of a system to an abstraction}
\begin{algorithmic}[1]
\Require Abstraction parameters $\Delta$, abstraction function $\psi$, validation data $\mathcal{D}^{val}$, Dirichlet prior $\bm{\alpha}_0$
\State Initialize $\Gamma \gets \{\}$ \Comment{Confidence levels over states}
\State $\bm{s}^{\psi} \gets [\psi(\bm{s}_i) \mid (\bm{s}_i, \bm{\hat{s}}_i) \in \mathcal{D}^{train}]$ \Comment{Bin the state data} \label{line:val0}
\State $\bm{\hat{s}^{\psi}} \gets [\psi(\bm{\hat{s}}_i) \mid (\bm{s}_i, \bm{\hat{s}}_i) \in \mathcal{D}^{train}]$ \label{line:val1}
\ForAll{unique $s^{\psi} \in \bm{s}^{\psi}$}
\State Let $I \gets \{ i \mid \bm{s}^{\psi}[i] = s^{\psi} \}$ \Comment{Vector indices where $s^{\psi}$ exists} \label{line:val2}
\State Let $\hat{\bm{s}}^{\psi}_{s^{\psi}} \gets [ \bm{\hat{s}^{\psi}}[i] \mid i \in I_{s^{\psi}} ]$ \Comment{Subset of $\bm{\hat{s}^{\psi}}$ indexed by $I$}
\State Let $\bm{n}  \gets   \left[ \left| \{i \mid  \hat{\bm{s}}^{\psi}_{s^{\psi}}[i] = s  \}  \right|   \mid  s \in \mathrm{unique}(\hat{\bm{s}}^{\psi}_{s^{\psi}})  \right]$ \label{line:val3}
\State Update $\bm{\alpha} \gets \bm{\alpha}_0 + \bm{n}$\label{line:val4}

\State\hspace{-5mm}\textbf{Monte Carlo integration:}
\State Draw $\{\mathbf{p}^{(i)}\}_{i=1}^N \overset{\mathrm{iid}}{\sim}\operatorname{Dir}(\bm{\alpha})$ \label{line:val5}
\State $ \gamma_{s^{\psi}} \gets 1- (1/N)\sum_{i=1}^N \mathbf{1}\bigl(\mathbf{p}^{(i)} \in \Delta[\cdot, s^{\psi}]\bigr)$
\State $\Gamma \gets \Gamma \cup 1-\gamma_{s^{\psi}}$ \label{line:val6}
\EndFor
\end{algorithmic}
\label{alg:val}
\end{algorithm}
\vspace{-5mm}

\looseness=-1
There are several ways of aggregating the state-wide confidences $\Gamma$ into the system-level confidence $\gamma$. One conservative system‐level metric of conformance is the worst‐case confidence $\min_{s^\psi\in S^\psi} \Gamma$. However, in practice, some bins $s^\psi$ may have only a few data points from $\mathcal D^{val}$, yielding low counts $n_i$ and thus an excessively low posterior confidence, even if most of the state space is well‐covered by $\mathcal D^{val}$. This overly low confidence has high uncertainty, can lead to numeric instability, and generally does not lead to a high validation precision due to the sensitivity to data availability, making it a poor indicator of system-wide conformance.
To overcome the above limitation, we propose aggregating state-wide confidences using the \textit{median} among all states, which is robust to outliers ($ \gamma = \operatorname{median}(\Gamma)$). Furthermore, the median reflects the confidence of at least half of the states. As we show in Section~\ref{sec:exp}, our selection of the median metric allows for easy discrimination between the validity of $\mathcal{M}_E$ on the old ($M_E$) and new ($M_{E'}$) systems. 

Finally, this section wraps up with a theorem that states a theoretical guarantee for our most conservative method. It asserts the lower confidence bound on the validity of a probabilistic safety claim under a novel environment $E'$:

\begin{theorem} [Out-of-distribution validity and safety]
\label{thm:safety-E}
    Let $M_{E'}$ be a concrete system subject to stochasticity from the latent environment distribution $E'$, from which originates training data $\mathcal{D}^{val}$ consisting of random trajectories $\tau$ drawn i.i.d. from $\sem{M_{E'}}$; let $\psi$ be an abstraction function; let $\mathcal{M}_E$ be an IMDP abstraction of the same system under $E$.
    
    If the following is true: 
    \begin{itemize}
        \item $\mathcal{M}_E$ is a statistically $\gamma$-valid abstraction of $M_{E'}$ with confidence $\gamma$  per Def.~\ref{def:prob-valid}. 
        \item $\mathcal{M}_E$ is safe for LTL safety property $\varphi$ with probability $1-\beta$ per Def.~\ref{def:prob-safe}. 
    \end{itemize}
    Then: 
    $$\Pr\!_{\mathcal{D}^{val}} \left[\Pr\!_{\tau \sim \sem{M_E}} \left( \tau \vDash \varphi \right) \ge 1-\beta \right] \ge 1-\gamma$$
\end{theorem}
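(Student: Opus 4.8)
The plan is to reuse, almost verbatim, the conditioning argument that must underlie Theorem~\ref{thm:id-safety}, swapping the frequentist data-confidence for the Bayesian posterior confidence. The skeleton has three moves: first identify a ``good event'' that the validity hypothesis guarantees with probability at least $1-\gamma$; second, show that on this good event the inner safety claim is \emph{forced} to hold; and third, conclude by monotonicity of probability. Nothing here requires new machinery beyond the worst-case (minimum-over-the-set) reading of the verification result in Problem~\ref{prb:ver-prob}.

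First I would unpack the two hypotheses into usable inequalities over the posterior draw $M \sim \mathbb{M}^{po}$. The validity hypothesis (Def.~\ref{def:prob-valid}) gives immediately that the containment event $A := \{\sem{M} \in \sem{\mathcal{M}_E}\}$ satisfies $\Pr_{\mathbb{M}^{po}}(A) \ge 1-\gamma$. The probabilistic-safety hypothesis (Def.~\ref{def:prob-safe}), read through the worst-case semantics of Eq.~\ref{prb:ver}, means that the minimum satisfaction probability over the \emph{entire} set $\sem{\mathcal{M}_E}$ is at least $1-\beta$; equivalently, \emph{every} distribution $D \in \sem{\mathcal{M}_E}$ obeys $\Pr_{\tau \sim D}(\tau \vDash \varphi) \ge 1-\beta$.

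The crucial implication is then nearly immediate. On the event $A$, the trajectory distribution $\sem{M}$ is one of the distributions belonging to $\sem{\mathcal{M}_E}$, so instantiating the ``every distribution'' clause at $D = \sem{M}$ yields $\Pr_{\tau \sim \sem{M}}(\tau \vDash \varphi) \ge 1-\beta$. Hence the inner event $B := \{\Pr_{\tau \sim \sem{M}}(\tau \vDash \varphi) \ge 1-\beta\}$ contains $A$ as a subset of model outcomes, and monotonicity gives $\Pr_{\mathbb{M}^{po}}(B) \ge \Pr_{\mathbb{M}^{po}}(A) \ge 1-\gamma$, which is the desired nested guarantee.

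The hard part will be conceptual rather than computational: reconciling the two notions of probability present in the statement. The conclusion writes the outer probability as $\Pr_{\mathcal{D}^{val}}[\cdot]$, yet the validity hypothesis supplies a \emph{posterior} probability over $\mathbb{M}^{po}$, which is itself a deterministic function of the observed $\mathcal{D}^{val}$. I would make explicit that, consistent with the paper's Bayesian framing, this outer $\Pr_{\mathcal{D}^{val}}$ denotes the posterior confidence \emph{conditioned on} having observed $\mathcal{D}^{val}$, not a frequentist average over resampled datasets, so that the two coincide and the argument closes. I would also flag what looks like a typo: since validation concerns the deployment model, the concrete object inside the conclusion should be $\sem{M_{E'}}$ (the model $M$ drawn from the posterior over $M_{E'}$'s parameters) rather than $\sem{M_E}$. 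A secondary, minor step is to confirm the abstract-to-concrete bookkeeping, namely that membership $\sem{M} \in \sem{\mathcal{M}_E}$ is at the level of the (abstracted) trajectory distributions on which $\varphi$ is evaluated; but this is inherited directly from the soundness and over-approximation machinery already deployed in Theorem~\ref{thm:id-safety}, so I would cite it rather than redo it.
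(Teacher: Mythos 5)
Your proof is correct and follows essentially the same route as the paper's: condition on the containment event $\sem{M} \in \sem{\mathcal{M}_E}$ supplied by statistical validity, transfer the $1-\beta$ safety bound from the abstraction to the concrete model on that event, and conclude by monotonicity (the paper phrases this last step via the law of total probability, which is equivalent). If anything, your execution is more careful than the paper's: instantiating the worst-case IMDP semantics at the posterior-drawn distribution is cleaner than the paper's trajectory-representative argument (``$\tau \sim \sem{M_{E'}}$ admits an abstract representative $\psi(\tau) \sim \sem{\mathcal{M}_E}$''), and you correctly flag both the $\sem{M_E}$-versus-$\sem{M_{E'}}$ typo in the conclusion and the need to read the outer $\Pr_{\mathcal{D}^{val}}$ as posterior confidence given the observed data, two points the paper's own proof glosses over.
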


The proof can be found in Appendix~\ref{app:proof-ood}.

\paragraph{Running example.} Consider the mountain car system introduced earlier, with state $s=(x,v) \in [-1.2,0.6]\times[-0.07,0.07]$. Suppose we have an abstraction $\mathcal{M}_E$, and aim to measure the extent to which it describes the behavior of the novel system $M_{E'}$.

We begin by gathering a new dataset $\mathcal{D}^{val}$ of trajectories from $M_{E'}$. Using the same abstraction function $\psi$ as before, each state is abstracted to discrete state tiles (e.g., a concrete state $s_k=(-0.3,0.06)$ maps to $\psi(s_k)=(-6,12)$), and estimation differences $x - \hat{x}$ map to discrete intervals ${-5,\dots,5}$.

Algorithm~\ref{alg:val} proceeds by binning validation data into abstract states (Lines~\ref{line:val0}--\ref{line:val1}), then iterating through each state tile, counting the occurrences of each abstract state seen at the current tile (Lines~\ref{line:val2}--\ref{line:val3}), and updating the prior parameters $\bm{\alpha}_0$ (Line~\ref{line:val4}). For example, suppose at $s^\psi = (-6,12)$, we observe the following state estimate relative occurrences: $\bm{n} = \{0, 3, 7, 11, 7, 0, \dots, 0\}$. Since $\bm{\alpha}_0$ is uniformly initialized, the posterior parameters become $\bm{\alpha} = \{1, 4, 8, 12, 8, 1, \dots, 1\}$.

We then approximate the integral for conformance confidence via Monte Carlo sampling (Lines~\ref{line:val5}--\ref{line:val6}). Drawing samples from the posterior Dirichlet distribution, we check whether each sampled probability vector falls within the previously computed confidence intervals $\Delta[\ \cdot \ ,(-6,12)]$. The fraction of samples within these intervals estimates our confidence $1-\gamma_{(-6,12)}$.

Repeating this process for each abstract state, we obtain confidence values $\Gamma$ across the entire discrete state space. Finally, we aggregate these confidences using the median to yield a robust system-level conformance metric $\gamma$, effectively validating the IMDP abstraction under environment $E'$.

\section{Experimental Evaluation}
\label{sec:exp}

Our experimental evaluation has three goals: (1) to demonstrate formal verification of safety for a vision-based autonomous system through an IMDP abstraction, (2) to check whether the validation framework effectively discriminates between systems deployed to in-distribution (ID) and out-of-distribution (OOD) scenarios, and (3) to characterize the tradeoffs among the confidence parameters involved in statistical abstraction and system-wide safety guarantees (namely, abstraction confidence $\alpha$, safety chance $1-\beta$, and conformance confidence $\gamma$). All experimental computation was performed on a modern laptop equipped with a 14th Gen Intel Core i9-14900HX 24-core processor, 64 GB DDR5 RAM, and an RTX 4090 GPU with 16 GB VRAM.

\subsection{Synthetic Goal-Reaching System}

\paragraph{Environment Description.} We consider a synthetic autonomous system $M_E$ where an agent must reach a waypoint located at coordinates $(10,10)$, starting from $(0.0, 0.0)$, within a two-dimensional state space $S = [0,12]\times[0,12]$. The agent uses a proportional controller with gain $0.5$, producing steps toward the waypoint with a maximum allowable magnitude of $0.7$. An execution is considered successful if the agent reaches within a radius of $2.0$ units of the waypoint in fewer than $100$ discrete time steps, without colliding with the barricades at $x=12$ and $y=12$. A collision or timeout constitutes an unsuccessful execution.

The agent has perfect localization; however, the waypoint's position is subject to uncertain estimation modeled by additive, unbiased Gaussian sensor noise with variance bounded within the interval $[\sigma^2_{\text{min}}, \sigma^2_{\text{max}}]$, which varies across the experiments. The magnitude of this noise decays as the agent approaches the waypoint, with the variance given by:
\begin{equation*}
\sigma^2(d) = \frac{(d - d_{\text{min}})(\sigma^2_{\text{max}} - \sigma^2_{\text{min}})}{d_{\text{max}} - d_{\text{min}}} + \sigma^2_{\text{min}},
\end{equation*}
where $d$ is the agent's Euclidean distance from the waypoint, bounded by $d_{\text{min}} = 2.0$ and $d_{\text{max}} = 14.14$ (the distance to the waypoint from the initial position).

\paragraph{Abstraction.} To construct an $\alpha$-sound abstraction of this goal-reaching system, we define an abstraction function $\psi$ using uniform discretization with bins of size $0.5$ units, partitioning the continuous 2-dimensional state space $[0,12]\times[0,12]$ into $576$ discrete tiles. We instantiate a training environment $E$ with fixed noise variance parameters $\sigma^2_{\text{min}} = 0.5$ and $\sigma^2_{\text{max}} = 0.5$, placing the agent at each discrete state tile and collecting $200$ state estimates per tile to form the dataset $\mathcal{D}^{\text{train}}$. Using this dataset, we compute the IMDP parameters via $\Delta = \texttt{ConfInt}\left(\mathcal{D}^{\text{train}}, \psi, \alpha \right)$, and subsequently construct the IMDP through $\texttt{IMDP} = \texttt{DynStruct}(\pi, f, S, \hat{S}, \psi)$, where the functions $\pi, f$ are inferred from closed-loop executions of $M_E$. Finally, we programmatically translate the IMDP into \texttt{PRISM} model syntax, resulting in approximately $16,000$ transitions. For this chosen level of state-space granularity, the model construction procedure requires roughly $120$ seconds of computation time on our laptop.

To confirm that the data-driven abstractions are a sound and over-approximate representation of the underlying system, we examined both the percentage of real trajectories covered by the abstraction, as well as the Jaccard similarity between the set of real and abstract transitions exiting a node, obtaining $100\%$ coverage and $\approx0.2$ similarity (depending on the value of $\alpha$).

\paragraph{Verification.}
We probabilistically model check the following PCTL property:
\begin{equation*}
\Pr\!^{\text{min}}_{\ge ?}\left[ (x < 12 \wedge y < 12) \;\mathsf{U}_{[0,100]}\; \sqrt{(x-10)^2 + (y-10)^2} \leq 2 \right]
\end{equation*}
which ensures that the agent never collides with a barricade and reaches within $2.0$ of the waypoint within $100$ time steps. The model was successfully verified at $10$ notable positions, and the verification results are summarized in Table~\ref{tab:synth-safety}.

\begin{table}
\vspace{-4mm}
\centering
\caption{Verified safety for the goal-reaching system from initial positions w/$\alpha = 0.05$}
\begin{tabular}{|>{\centering}p{1cm}|>{\centering}p{1cm}| p{1.1cm}|}
\hline
$x_0$ & $y_0$ & ~$1-\beta$ \\
\hline
0.0  & 0.0  & ~0.6909 \\
0.0  & 2.5  & ~0.7201 \\
2.5  & 0.0  & ~0.7202 \\
2.5  & 2.5  & ~0.7466 \\
0.0  & 5.0  & ~0.7465 \\
5.0  & 0.0  & ~0.7465 \\
5.0  & 5.0  & ~0.7934 \\
0.0  & 7.5  & ~0.7689 \\
7.5  & 0.0  & ~0.7855 \\
7.5  & 7.5  & ~0.8430 \\
\hline
\end{tabular}
\label{tab:synth-safety}
\vspace{-6mm}
\end{table}

\paragraph{Verification and Validation Tradeoff.}
We analyze the tradeoff between statistical abstraction confidence $1-\alpha$ and the resulting safety guarantees by varying the binomial confidence parameter $\alpha$ used in the construction of $\Delta$. We plot these safety probabilities against $\alpha$ in Figure~\ref{fig:synth-tradeoff}.

In addition to the model checking results, we also leverage the abstraction parameters $\Delta$ to measure how well the abstraction conforms to the in-distribution system through Algorithm~\ref{alg:val}. We instantiate another in-distribution environment with the same parameters as the training system ($\sigma^2 \in [0.5, 4.0]$), then collect $1000$ independent trajectories of state/estimate pairs, which yields the dataset $\mathcal{D}^{val}$. From the output, we obtain a representative confidence level $1-\gamma  = \mathsf{median} \left[  \texttt{Validate}(\Delta, \psi, \mathcal{D}^{val}) \right]$, and plot this against $\alpha$ in Figure~\ref{fig:synth-tradeoff}.

\begin{figure}[htb]
\vspace{-4mm}
\centering
\includegraphics[scale=0.5]{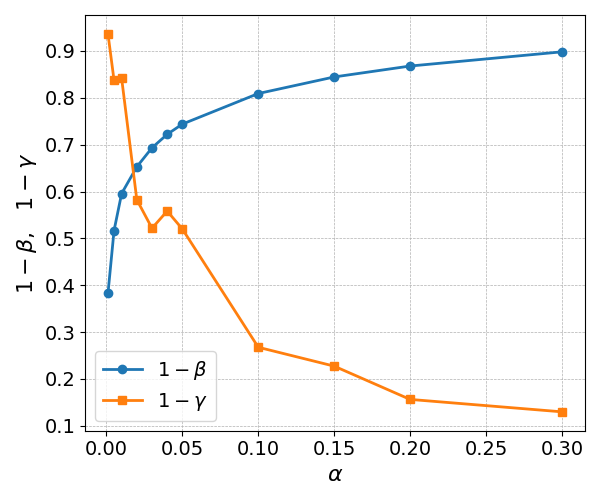}
\vspace{-4mm}
\caption{Safety chance and conformance confidence are tradeoffs as functions of $\alpha$ for the goal-reaching system. A smaller $\alpha$ corresponds to higher binomial confidence, leading to looser IMDP intervals, thus providing the model checker with greater flexibility to select worst-case distributions, resulting in lower safety chances.}
\label{fig:synth-tradeoff}
\vspace{-4mm}
\end{figure}

\paragraph{Discriminating ID and OOD systems.} To demonstrate the discriminative ability of our validation, we instantiate the agent in multiple domain-shifted environments, collect $1000$ independent trajectories from each, and compute conformance confidences used as classification scores for the ID vs. OOD decision. 

A domain shift is induced by biasing the distribution of state estimate noise with a scalar offset. To label which environments are truly outside of our IMDP, we compare their data-driven $99\%$-confidence interval with the model-checked interval from the IMDP. If these intervals do not overlap, the environment is deemed to be OOD for the IMDP. We also collect four datasets from the non-shifted environment, which are considered ID for the IMDP.

Our validation results are shown in Table~\ref{tab:safety-vs-distance}. Clearly, many reasonable thresholds (e.g., 0.5) on the $\gamma$ values would lead to a perfect discrimination between the ID and OOD cases. Note that the environments with smaller shifts (between 0 and 2.45) could not be conclusively labeled ID or OOD, so they were excluded.

\paragraph{Effects of discretization granularity.} We performed a small-scale ablation to assess how discretization granularity (tile size) affects verification and validation outcomes, holding the abstraction confidence at $\alpha = 0.05$, and checking the above property. As the state space was partitioned into finer tiles, the safety chance increased while the conformance confidence decreased. For example, with $1.0 \times 1.0$ tiles, the safety chance was $0.036$ (model checking time: $7.8s$) and conformance confidence was $0.464$. Refining to $0.8 \times 0.8$ tiles, safety chance increased to $0.067$ ($13.8s$) and conformance confidence was $0.450$. At $0.75 \times 0.75$, safety chance rose to $0.695$ ($15.4$) with conformance confidence $0.314$. Using $0.5 \times 0.5$ tiles, safety chance increased to $0.9998$ ($15.4s$), while conformance confidence was $0.406$.

\begin{table}[h]
\centering
\caption{Validation confidences and empirical safety for four in-distribution and ten out-of-distribution environments. The ID safety chance from \texttt{PRISM} is $[0.5837, 1.0000]$.}
\label{tab:safety-vs-distance}
{\setlength{\tabcolsep}{8pt}%
\begin{tabular}{ccccc}
\toprule
Case & Distance & Success rate & 99\% success CI    & $1-\gamma$  
\\
\midrule
ID & 0.00                  & 1.000        & [0.9947, 1.0000]  & 0.7782  
\\
ID & 0.00                  & 1.000        & [0.9947, 1.0000]  & 0.7593 
\\
ID & 0.00                  & 1.000        & [0.9947, 1.0000]  & 0.7093 
\\
ID & 0.00                  & 1.000        & [0.9947, 1.0000]  & 0.6530 
\\
OOD & 2.45                  & 0.505        & [0.4638, 0.5461]  & 0.0000  
\\
OOD & 2.47                  & 0.454        & [0.4132, 0.4952]  & 0.0000 
\\
OOD & 2.50                  & 0.405        & [0.3651, 0.4459]  & 0.0000  
\\
OOD & 2.52                  & 0.391        & [0.3514, 0.4317]  & 0.0000 
\\
OOD & 2.55                  & 0.289        & [0.2527, 0.3274]  & 0.0000  
\\
OOD & 2.60                  & 0.263        & [0.2278, 0.3004]  & 0.0000 
\\
OOD & 2.70                  & 0.176        & [0.1461, 0.2091]  & 0.0000  
\\
OOD & 2.80                  & 0.092        & [0.0700, 0.1180]  & 0.0000  
\\
OOD & 2.90                  & 0.061        & [0.0431, 0.0832]  & 0.0000 
\\
OOD & 3.00                  & 0.025        & [0.0141, 0.0407]  & 0.0000  
\\
\bottomrule
\end{tabular}}
\end{table}

\subsection{Vision-based Mountain Car}

\looseness=-1
\paragraph{Environment Description.} We evaluate our methodology using the Mountain Car benchmark used as a running example in Section~\ref{sec:app}. Here we briefly augment the description of this system with details relevant to the experiment.

For our experiments, we trained a lightweight Deep Q-Network (DQN) policy over $5000$ episodes with $\epsilon_0=1.0$ and a decay rate of $0.9983$. During closed-loop execution with non-linear dynamics, the policy is guided by intermediate position estimates generated by a pre-trained vision-based estimator~\cite{waite_state-dependent_2025}, which predicts the car’s position directly from a noisy image.

To induce noise in vision-based state estimation, the images are perturbed with Gaussian noise before being input to the vision model. We modeled the "in-distribution" environment by applying unbiased Gaussian noise with $\sigma =0.1$ to the grayscale MountainCar images. Figure~\ref{fig:noisy-frames} shows example images perturbed with noise with parameters $\sigma \in \{0.00, 0.10, 0.25, 0.50\}$.

\begin{figure}[H]
\vspace{-4mm}
\centering
\includegraphics[scale=0.7]{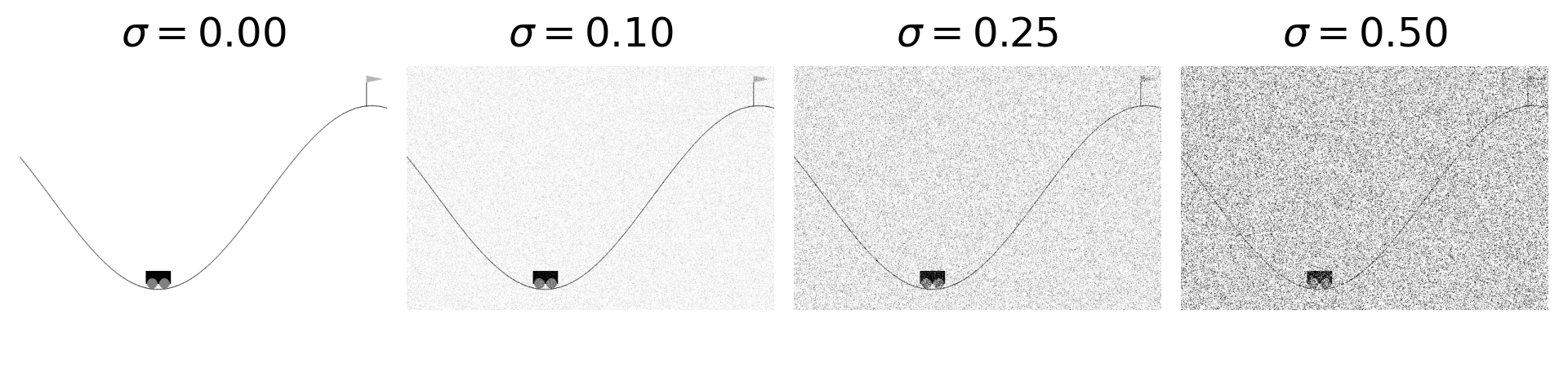}
\vspace{-7mm}
\caption{Examples of noisy visual inputs from the MountainCar environment. The true car position is fixed at $-0.46$. The position estimation errors are $0.01$, $0.25$, $0.46$, and $0.58$ from left to right.}
\label{fig:noisy-frames}
\vspace{-4mm}
\end{figure}

\paragraph{Abstraction.} To construct an $\alpha$-sound abstraction of the MountainCar, we define an abstraction function $\psi$ using uniform discretization with bins of size $0.05$ for $x$ and $0.005$ for $v$, partitioning the continuous 2-dimensional state space $[-1.2, 0.6] \times [-0.07,0.07]$ into $1008$ discrete tiles. We instantiate a training environment $E$ with fixed noise standard deviation $\sigma=0.1$. To collect the dataset $\mathcal{D}^{\text{train}}$, we place the mountain car at each of the discrete state tiles and collect $100$ observations and subsequent position estimates. Using this dataset, we compute the IMDP parameters via $\Delta = \texttt{ConfInt}\left(\mathcal{D}^{\text{train}}, \psi, \alpha \right)$, and subsequently construct the IMDP through $\texttt{IMDP} = \texttt{DynStruct}(\pi, f, S, \hat{S}, \psi)$. In this experiment, $\pi$ is a DQN policy, and the mountain car dynamics $f$ are inferred by propagating the gym environment over one time step for particular state/action pairs. Finally, we programmatically translate the IMDP into \texttt{PRISM} model syntax, resulting in approximately $23,000$ transitions. For this chosen level of state-space granularity, the model construction procedure requires roughly $300$ seconds of computation time on a modern laptop. Similarly to the first case study, this abstraction is conservative: it has $100\%$ coverage of concrete behaviors and $\approx0.2$ Jaccard similarity between the set of real and abstract transitions exiting a node.

\paragraph{Verification.} 
We probabilistically model check the following PCTL property:
$$\Pr\!^{\text{min}}_{\ge ?} \left[  F_{[0,200]} \left(x \ge 0.45\right) \right]$$
which ensures that the car reaches the goal within 200 discrete time steps. The model was successfully verified at states such as $(0.2, 0.07)$, which yields a $1-\beta$ safety chance of $0.9991$, and $(0.3, 0.06)$, which yields the same safety chance.

\vspace{-2mm}

\paragraph{Verification and Validation Tradeoff.} 
We again analyze the tradeoff between the lower bound on safety of the IMDP, the conformance of a novel system to this IMDP, and the confidence level used to construct the binomial CIs for the IMDP from real data. We analyze the tradeoff between statistical abstraction confidence $1-\alpha$ and the resulting safety guarantees by varying the binomial confidence parameter $\alpha$ exponentially between $0.001$ and $0.3$ to construct $\Delta$.

\looseness=-1
We then collect another dataset $\mathcal{D}^{val}$ of state/estimate trajectories over $50$ episodes in the environment with $\sigma=0.1$. Then, for each $\alpha$, we tabulate the lower bound of safety of the abstraction $1-\beta$ and the median representative conformance confidence of the abstraction to the data $\mathcal{D}^{val}$ computed as $1-\gamma  = \mathsf{median} \left[  \texttt{Validate}(\Delta, \psi, \mathcal{D}^{val}) \right]$, producing the results seen in Figure~\ref{fig:mc-tradeoff}. They confirm the observation that model soundness attenuates the safety-validity tradeoff. 

\paragraph{Limitations.}
The most obvious limitation of our approach is its scalability: our abstraction method produces sizable $\texttt{PRISM}$ models that consume time and computational resources when model checking. This makes it challenging to verify systems with non-linear dynamics and controls (such as the MountainCar) because they require carefully balancing non-determinism between overly conservative and overly large abstractions. Nonetheless, our theoretical approach is not inherently limited to low dimensions. This issue can be mitigated with more sophisticated abstraction in the future, like adaptive tiling. Secondly, measuring the conformance of the IMDP parameters at each discrete state is sample-inefficient. We anticipate that continuous statespace-wide distribution models, such as Gaussian processes, will be able to overcome this limitation.

\begin{figure}[H]
\vspace{-4mm}
\centering
\includegraphics[scale=0.5]{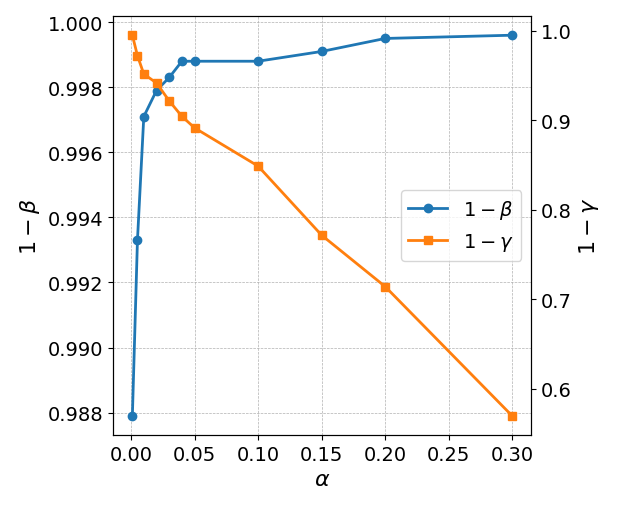}
\vspace{-4mm}
\caption{Safety chance and conformance confidence are tradeoffs as functions of $\alpha$ for the mountain car, analogous to Figure~\ref{fig:synth-tradeoff}.
}
\label{fig:mc-tradeoff}
\vspace{-4mm}
\end{figure}

\section{Conclusion}
\label{sec:conc}

\vspace{-1mm}

This paper contributes a unified verification \& validation methodology for assuring vision-based autonomous systems. The methodology contains three steps: (1) statistical \textbf{soundness} of the vision-based system abstraction, (2) probabilistic \textbf{verification} of the safety of the vision-based system, and (3) statistical \textbf{validation} of the abstraction in a new operational environment. Our experiments have shown that our abstractions are sound: the IMDP trajectories robustly over-approximate those of the true system. In both case studies, we achieved $100\%$ coverage of the real system trajectories with the IMDP, and a Jaccard similarity of approximately $0.2$ (no. of transitions shared between the real and abstract systems divided by the total no. of transitions). For verification, we indeed produce genuine safety chance lower bounds in \texttt{PRISM} model checking. For validation, system-wide conformance checking via Bayesian inference effectively discriminates between in-distribution and shifted environments. Finally, end-to-end guarantees are established via Theorems~\ref{thm:id-safety} and~\ref{thm:safety-E} and their proofs.

This paper opens several fruitful directions for future research. First, our validation procedure can be targeted towards the states that carry the most probability mass in the verification. Second, one can apply more sophisticated uncertainty models to create abstractions, such as Gaussian processes. Third, there are promising models of photorealistic neural scene representation, such as Neural Radiance Fields (NeRF) \cite{10.1145/3503250} and Gaussian splats \cite{kerbl3Dgaussians}, which enables photorealistic simulation of vision-based robotics applications like zero-shot sim-to-real transfer for aerial navigation \cite{Miao:IROS2025, miao2025falconwingopensourceplatformultralight}, self-driving \cite{xie2024vid2sim}, and manipulation tasks \cite{robosplat}; however, the soundness of those neural representation abstraction remains an open question. Lastly, state abstraction can be improved with more sophisticated procedures, such as adaptive tiling or polynomial representations. Furthermore, we plan to apply our approach to realistic physical systems, such as drone racing \cite{Miao:IROS2025} and vision-based fixed-wing landing \cite{miao2025falconwingopensourceplatformultralight}.

\bibliographystyle{splncs04}
\bibliography{bib/bib,bib/ivan-autogen}

\begin{thebibliography}{10}
\providecommand{\url}[1]{\texttt{#1}}
\providecommand{\urlprefix}{URL }
\providecommand{\doi}[1]{https://doi.org/#1}

\bibitem{aichernig_probabilistic_2019}
Aichernig, B.K., Tappler, M.: Probabilistic black-box reachability checking (extended version). Formal Methods in System Design  \textbf{54}(3),  416--448 (Nov 2019). \doi{10.1007/s10703-019-00333-0}, \url{https://doi.org/10.1007/s10703-019-00333-0}

\bibitem{alasmari_quantitative_2022}
Alasmari, N., Calinescu, R., Paterson, C., Mirandola, R.: Quantitative verification with adaptive uncertainty reduction. Journal of Systems and Software  \textbf{188},  111275 (Jun 2022). \doi{10.1016/j.jss.2022.111275}, \url{https://www.sciencedirect.com/science/article/pii/S016412122200036X}

\bibitem{astorga_perception_2023}
Astorga, A., Hsieh, C., Madhusudan, P., Mitra, S.: Perception {Contracts} for {Safety} of {ML}-{Enabled} {Systems}. Proceedings of the ACM on Programming Languages  \textbf{7}(OOPSLA2),  2196--2223 (Oct 2023). \doi{10.1145/3622875}, \url{https://dl.acm.org/doi/10.1145/3622875}

\bibitem{badings_scenario-based_2022}
Badings, T., Cubuktepe, M., Jansen, N., Junges, S., Katoen, J.P., Topcu, U.: Scenario-based verification of uncertain parametric {MDPs}. International Journal on Software Tools for Technology Transfer  \textbf{24}(5),  803--819 (Oct 2022). \doi{10.1007/s10009-022-00673-z}, \url{https://doi.org/10.1007/s10009-022-00673-z}

\bibitem{badings_efficient_2023}
Badings, T., Junges, S., Marandi, A., Topcu, U., Jansen, N.: Efficient {Sensitivity} {Analysis} for {Parametric} {Robust} {Markov} {Chains}. 35th International Conference on Computer Aided Verification  (Apr 2023)

\bibitem{badithela_evaluation_2022}
Badithela, A., Wongpiromsarn, T., Murray, R.M.: Evaluation {Metrics} for {Object} {Detection} for {Autonomous} {Systems} (Oct 2022). \doi{10.48550/arXiv.2210.10298}, \url{http://arxiv.org/abs/2210.10298}, arXiv:2210.10298 [cs, eess]

\bibitem{bakirtzis_compositional_2021-1}
Bakirtzis, G., Subrahmanian, E., Fleming, C.H.: Compositional {Thinking} in {Cyberphysical} {Systems} {Theory}. Computer  \textbf{54}(12),  50--59 (Dec 2021). \doi{10.1109/MC.2021.3085532}, \url{https://ieeexplore.ieee.org/document/9622291/}

\bibitem{bensalem_what_2024}
Bensalem, S., Cheng, C.H., Huang, W., Huang, X., Wu, C., Zhao, X.: What, {Indeed}, is an {Achievable} {Provable} {Guarantee} for {Learning}-{Enabled} {Safety}-{Critical} {Systems}. In: Steffen, B. (ed.) Bridging the {Gap} {Between} {AI} and {Reality}. pp. 55--76. Springer Nature Switzerland, Cham (2024). \doi{10.1007/978-3-031-46002-9_4}

\bibitem{bharadwaj_reduction_2017}
Bharadwaj, S., Roux, S.L., Perez, G., Topcu, U.: Reduction {Techniques} for {Model} {Checking} and {Learning} in {MDPs} pp. 4273--4279 (2017), \url{https://www.ijcai.org/proceedings/2017/597}

\bibitem{blundell_assume-guarantee_2005}
Blundell, C., Giannakopoulou, D., Pǎsǎreanu, C.S.: Assume-guarantee testing. SIGSOFT Softw. Eng. Notes  \textbf{31}(2),  1--es (Sep 2005). \doi{10.1145/1118537.1123060}, \url{https://dl.acm.org/doi/10.1145/1118537.1123060}

\bibitem{bojarski_end_2016}
Bojarski, M., Del~Testa, D., Dworakowski, D., Firner, B., Flepp, B., Goyal, P., Jackel, L.D., Monfort, M., Muller, U., Zhang, J., Zhang, X., Zhao, J., Zieba, K.: End to {End} {Learning} for {Self}-{Driving} {Cars} (Apr 2016). \doi{10.48550/arXiv.1604.07316}, \url{http://arxiv.org/abs/1604.07316}, arXiv:1604.07316 [cs]

\bibitem{brockman_openai_2016}
Brockman, G., Cheung, V., Pettersson, L., Schneider, J., Schulman, J., Tang, J., Zaremba, W.: {OpenAI} {Gym} (Jun 2016). \doi{10.48550/arXiv.1606.01540}, \url{http://arxiv.org/abs/1606.01540}, arXiv:1606.01540 [cs]

\bibitem{calinescu_controller_2024}
Calinescu, R., Imrie, C., Mangal, R., Rodrigues, G.N., Păsăreanu, C., Santana, M.A., Vázquez, G.: Controller {Synthesis} for {Autonomous} {Systems} {With} {Deep}-{Learning} {Perception} {Components}. IEEE Transactions on Software Engineering  (2024), \url{https://www.computer.org/csdl/journal/ts/2024/06/10496502/1W28Vqz3hQc}

\bibitem{carpenter_modelguard_2021}
Carpenter, T.J., Ivanov, R., Lee, I., Weimer, J.: {ModelGuard}: {Runtime} {Validation} of {Lipschitz}-continuous {Models}. In: 7th {IFAC} {Conference} on {Analysis} and {Design} of {Hybrid} {Systems} ({ADHS}'21) (2021), arXiv: 2104.15006

\bibitem{cauchois_robust_2024}
Cauchois, M., Gupta, S., Ali, A., Duchi, J.C.: Robust {Validation}: {Confident} {Predictions} {Even} {When} {Distributions} {Shift}. Journal of the American Statistical Association  \textbf{0}(0),  1--66 (2024). \doi{10.1080/01621459.2023.2298037}, \url{https://doi.org/10.1080/01621459.2023.2298037}, publisher: Taylor \& Francis \_eprint: https://doi.org/10.1080/01621459.2023.2298037

\bibitem{chapman_risk-sensitive_2020}
Chapman, M.P.: Risk-{Sensitive} {Safety} {Analysis} and {Control} for {Trustworthy} {Autonomy}. Ph.{D}., University of California, Berkeley, United States -- California (2020), \url{https://www.proquest.com/docview/2460739262/abstract/18015E97A18F4C45PQ/1}, iSBN: 9798678171221

\bibitem{cleaveland_conservative_2025}
Cleaveland, M., Lu, P., Sokolsky, O., Lee, I., Ruchkin, I.: Conservative {Perception} {Models} for {Probabilistic} {Verification} (Apr 2025). \doi{10.48550/arXiv.2503.18077}, \url{http://arxiv.org/abs/2503.18077}, arXiv:2503.18077 [cs] version: 2

\bibitem{cleaveland_monotonic_2022}
Cleaveland, M., Ruchkin, I., Sokolsky, O., Lee, I.: Monotonic {Safety} for {Scalable} and {Data}-{Efficient} {Probabilistic} {Safety} {Analysis}. In: {ACM}/{IEEE} 13th {Intl}. {Conf}. on {Cyber}-{Physical} {Systems} ({ICCPS}). pp. 92--103 (May 2022). \doi{10.1109/ICCPS54341.2022.00015}

\bibitem{cleaveland_conservative_2023}
Cleaveland, M., Sokolsky, O., Lee, I., Ruchkin, I.: Conservative {Safety} {Monitors} of {Stochastic} {Dynamical} {Systems}. In: Proc. of the {NASA} {Formal} {Methods} {Conference} (May 2023)

\bibitem{corso_survey_2022}
Corso, A., Moss, R., Koren, M., Lee, R., Kochenderfer, M.: A {Survey} of {Algorithms} for {Black}-{Box} {Safety} {Validation} of {Cyber}-{Physical} {Systems}. Journal of Artificial Intelligence Research  \textbf{72},  377--428 (Jan 2022). \doi{10.1613/jair.1.12716}, \url{https://dl.acm.org/doi/10.1613/jair.1.12716}

\bibitem{cubuktepe_scenario-based_2020}
Cubuktepe, M., Jansen, N., Junges, S., Katoen, J.P., Topcu, U.: Scenario-{Based} {Verification} of {Uncertain} {MDPs}. In: Tools and {Algorithms} for the {Construction} and {Analysis} of {Systems}. pp. 287--305. Lecture {Notes} in {Computer} {Science}, Springer International Publishing, Cham (2020). \doi{10.1007/978-3-030-45190-5_16}

\bibitem{camara_haiq_2020}
Cámara, J.: {HaiQ}: {Synthesis} of {Software} {Design} {Spaces} with {Structural} and {Probabilistic} {Guarantees}. In: Proceedings of the 8th {International} {Conference} on {Formal} {Methods} in {Software} {Engineering}. pp. 22--33. {FormaliSE} '20, Association for Computing Machinery, New York, NY, USA (Oct 2020). \doi{10.1145/3372020.3391562}, \url{https://doi.org/10.1145/3372020.3391562}

\bibitem{dinnocenzo_robust_2012}
D'Innocenzo, A., Abate, A., Katoen, J.P.: Robust {PCTL} model checking. In: Proceedings of the 15th {ACM} international conference on {Hybrid} {Systems}: {Computation} and {Control}. pp. 275--286. {HSCC} '12, Association for Computing Machinery, New York, NY, USA (Apr 2012). \doi{10.1145/2185632.2185673}, \url{https://doi.org/10.1145/2185632.2185673}

\bibitem{dong_reliability_2023}
Dong, Y., Huang, W., Bharti, V., Cox, V., Banks, A., Wang, S., Zhao, X., Schewe, S., Huang, X.: Reliability {Assessment} and {Safety} {Arguments} for {Machine} {Learning} {Components} in {System} {Assurance}. ACM Transactions on Embedded Computing Systems  \textbf{22}(3),  48:1--48:48 (Apr 2023). \doi{10.1145/3570918}, \url{https://dl.acm.org/doi/10.1145/3570918}

\bibitem{dong_visual_2015}
Dong, Y., Huang, J., Ai, J.: Visual {Perception}-{Based} {Target} {Aircraft} {Movement} {Prediction} for {Autonomous} {Air} {Combat}. Journal of Aircraft  \textbf{52}(2),  538--552 (Mar 2015). \doi{10.2514/1.C032764}, \url{https://arc.aiaa.org/doi/10.2514/1.C032764}, publisher: American Institute of Aeronautics and Astronautics

\bibitem{dong_deep_2021}
Dong, Y., Tao, J., Zhang, Y., Lin, W., Ai, J.: Deep {Learning} in {Aircraft} {Design}, {Dynamics}, and {Control}: {Review} and {Prospects}. IEEE Transactions on Aerospace and Electronic Systems  \textbf{57}(4),  2346--2368 (Aug 2021). \doi{10.1109/TAES.2021.3056086}, \url{https://ieeexplore.ieee.org/abstract/document/9353718}

\bibitem{dreossi_compositional_2019}
Dreossi, T., Donzé, A., Seshia, S.A.: Compositional {Falsification} of {Cyber}-{Physical} {Systems} with {Machine} {Learning} {Components}. Journal of Automated Reasoning  \textbf{63}(4),  1031--1053 (Dec 2019). \doi{10.1007/s10817-018-09509-5}, \url{https://doi.org/10.1007/s10817-018-09509-5}

\bibitem{dreossi_verifai_2019}
Dreossi, T., Fremont, D.J., Ghosh, S., Kim, E., Ravanbakhsh, H., Vazquez-Chanlatte, M., Seshia, S.A.: {VERIFAI}: {A} {Toolkit} for the {Design} and {Analysis} of {Artificial} {Intelligence}-{Based} {Systems}. arXiv:1902.04245 [cs]  (Feb 2019), \url{http://arxiv.org/abs/1902.04245}, arXiv: 1902.04245

\bibitem{dutta_distributionally_2025}
Dutta, S., Caprio, M., Lin, V., Cleaveland, M., Jang, K.J., Ruchkin, I., Sokolsky, O., Lee, I.: Distributionally {Robust} {Statistical} {Verification} with {Imprecise} {Neural} {Networks}. In: Proc. of {HSCC} 2025. Springer (2025). \doi{10.48550/arXiv.2308.14815}

\bibitem{DArgenio2015SmartLSS}
D’Argenio, P., Legay, A., Sedwards, S., Traonouez, L.M.: Smart sampling for lightweight verification of markov decision processes. International Journal on Software Tools for Technology Transfer  \textbf{17}(4),  469--484 (2015)

\bibitem{fainekos_robustness_2009}
Fainekos, G.E., Pappas, G.J.: Robustness of temporal logic specifications for continuous-time signals. Theoretical Computer Science  \textbf{410}(42),  4262--4291 (Sep 2009). \doi{10.1016/j.tcs.2009.06.021}, \url{http://www.sciencedirect.com/science/article/pii/S0304397509004149}

\bibitem{feng_learning-based_2011}
Feng, L., Han, T., Kwiatkowska, M., Parker, D.: Learning-{Based} {Compositional} {Verification} for {Synchronous} {Probabilistic} {Systems}. In: Proc. of {ATVA}. pp. 511--521. Springer Berlin Heidelberg (2011)

\bibitem{filos_can_2020}
Filos, A., Tigkas, P., Mcallister, R., Rhinehart, N., Levine, S., Gal, Y.: Can {Autonomous} {Vehicles} {Identify}, {Recover} {From}, and {Adapt} to {Distribution} {Shifts}? In: Proceedings of the 37th {International} {Conference} on {Machine} {Learning}. pp. 3145--3153. PMLR (Nov 2020), \url{https://proceedings.mlr.press/v119/filos20a.html}, iSSN: 2640-3498

\bibitem{fleiss_statistical_2003}
Fleiss, J.L., Levin, B., Paik, M.C.: Statistical {Methods} for {Rates} \& {Proportions}. Wiley-Interscience, Hoboken, N.J, 3rd edn. (2003)

\bibitem{frehse_assume-guarantee_2004}
Frehse, G., Han, Z., Krogh, B.: Assume-guarantee reasoning for hybrid {I}/{O}-automata by over-approximation of continuous interaction. In: Proc. of the 43rd {IEEE} {Conference} on {Decision} and {Control} ({CDC}) ({IEEE} {Cat}. {No}.{04CH37601}). vol.~1, pp. 479--484 Vol.1 (Dec 2004). \doi{10.1109/CDC.2004.1428676}

\bibitem{frenkel_assume_2022}
Frenkel, H., Grumberg, O., Păsăreanu, C.S., Sheinvald, S.: Assume, guarantee or repair: a regular framework for non regular properties. International Journal on Software Tools for Technology Transfer  \textbf{24}(5),  667--689 (Oct 2022). \doi{10.1007/s10009-022-00669-9}, \url{https://doi.org/10.1007/s10009-022-00669-9}

\bibitem{fu_probably_2014}
Fu, J., Topcu, U.: Probably {Approximately} {Correct} {MDP} {Learning} and {Control} {With} {Temporal} {Logic} {Constraints}. Robotics: Science and Systems X  (Jul 2014). \doi{10.15607/RSS.2014.X.039}, \url{http://www.roboticsproceedings.org/rss10/p39.pdf}, conference Name: Robotics: Science and Systems 2014 ISBN: 9780992374709 Publisher: Robotics: Science and Systems Foundation

\bibitem{graf_contract-based_2014}
Graf, S., Passerone, R., Quinton, S.: Contract-{Based} {Reasoning} for {Component} {Systems} with {Rich} {Interactions}. In: Embedded {Systems} {Development}, pp. 139--154. No.~20 in Embedded {Systems}, Springer New York (2014)

\bibitem{gupta_distribution-free_2022}
Gupta, C., Podkopaev, A., Ramdas, A.: Distribution-free binary classification: prediction sets, confidence intervals and calibration. arXiv:2006.10564 [cs, math, stat]  (Feb 2022), \url{http://arxiv.org/abs/2006.10564}, arXiv: 2006.10564

\bibitem{hensel_probabilistic_2020}
Hensel, C., Junges, S., Katoen, J.P., Quatmann, T., Volk, M.: The {Probabilistic} {Model} {Checker} {Storm} (Oct 2020). \doi{10.48550/arXiv.2002.07080}, \url{http://arxiv.org/abs/2002.07080}, arXiv:2002.07080 [cs]

\bibitem{holtzen_model_2021}
Holtzen, S., Junges, S., Vazquez-Chanlatte, M., Millstein, T., Seshia, S.A., Van~den Broeck, G.: Model {Checking} {Finite}-{Horizon} {Markov} {Chains} with {Probabilistic} {Inference}. In: Computer {Aided} {Verification}. pp. 577--601. Lecture {Notes} in {Computer} {Science}, Springer International Publishing, Cham (2021). \doi{10.1007/978-3-030-81688-9_27}

\bibitem{holtzen_probabilistic_2017}
Holtzen, S., Millstein, T., Broeck, G.V.d.: Probabilistic {Program} {Abstractions}. arXiv:1705.09970 [cs]  (Jul 2017), \url{http://arxiv.org/abs/1705.09970}, arXiv: 1705.09970

\bibitem{howard_dynamic_2007}
Howard, R.A.: Dynamic {Probabilistic} {Systems}, {Volume} {II}: {Semi}-{Markov} and {Decision} {Processes}. Dover Publications, New York, 1 edn. (Jun 2007)

\bibitem{tcad-HsiehLSJMM22}
Hsieh, C., Li, Y., Sun, D., Joshi, K., Misailovic, S., Mitra, S.: Verifying controllers with vision-based perception using safe approximate abstractions. {IEEE} Trans. Comput. Aided Des. Integr. Circuits Syst.  \textbf{41}(11),  4205--4216 (2022). \doi{10.1109/TCAD.2022.3197508}, \url{https://doi.org/10.1109/TCAD.2022.3197508}

\bibitem{huang_safety_2017}
Huang, X., Kwiatkowska, M., Wang, S., Wu, M.: Safety {Verification} of {Deep} {Neural} {Networks}. In: Computer {Aided} {Verification}. pp. 3--29. Lecture {Notes} in {Computer} {Science}, Springer International Publishing, Cham (2017). \doi{10.1007/978-3-319-63387-9_1}

\bibitem{incer_hypercontracts_2022}
Incer, I., Benveniste, A., Sangiovanni-Vincentelli, A., Seshia, S.A.: Hypercontracts. In: {NASA} {Formal} {Methods}: 14th {International} {Symposium}, {NFM} 2022, {Pasadena}, {CA}, {USA}, {May} 24–27, 2022, {Proceedings}. pp. 674--692. Springer-Verlag, Berlin, Heidelberg (May 2022), \url{https://doi.org/10.1007/978-3-031-06773-0_36}

\bibitem{ivanov_case_2020}
Ivanov, R., Carpenter, T.J., Weimer, J., Alur, R., Pappas, G.J., Lee, I.: Case study: verifying the safety of an autonomous racing car with a neural network controller. In: Proceedings of the 23rd {International} {Conference} on {Hybrid} {Systems}: {Computation} and {Control}. pp.~1--7. {HSCC} '20, Association for Computing Machinery, New York, NY, USA (Apr 2020). \doi{10.1145/3365365.3382216}, \url{https://dl.acm.org/doi/10.1145/3365365.3382216}

\bibitem{jackson_formal_2021}
Jackson, J., Laurenti, L., Frew, E., Lahijanian, M.: Formal {Verification} of {Unknown} {Dynamical} {Systems} via {Gaussian} {Process} {Regression} (Dec 2021), \url{https://www.semanticscholar.org/paper/Formal-Verification-of-Unknown-Dynamical-Systems-Jackson-Laurenti/8093250cc2b5876880717afc806ee7f2a35fd1ba}

\bibitem{katoen_probabilistic_2016}
Katoen, J.P.: The {Probabilistic} {Model} {Checking} {Landscape}. In: Proceedings of the 31st {Annual} {ACM}/{IEEE} {Symposium} on {Logic} in {Computer} {Science}. pp. 31--45. {LICS} '16, Association for Computing Machinery, New York, NY, USA (Jul 2016). \doi{10.1145/2933575.2934574}, \url{https://doi.org/10.1145/2933575.2934574}

\bibitem{kerbl3Dgaussians}
Kerbl, B., Kopanas, G., Leimk{\"u}hler, T., Drettakis, G.: 3d gaussian splatting for real-time radiance field rendering. ACM Transactions on Graphics  \textbf{42}(4) (July 2023), \url{https://repo-sam.inria.fr/fungraph/3d-gaussian-splatting/}

\bibitem{komuravelli_assume-guarantee_2012}
Komuravelli, A., Pasareanu, C.S., Clarke, E.M.: Assume-{Guarantee} {Abstraction} {Refinement} for {Probabilistic} {Systems}. arXiv:1207.5086 [cs]  \textbf{7358},  310--326 (2012). \doi{10.1007/978-3-642-31424-7_25}, \url{http://arxiv.org/abs/1207.5086}, arXiv: 1207.5086

\bibitem{kurakin_adversarial_2017}
Kurakin, A., Goodfellow, I.J., Bengio, S.: Adversarial examples in the physical world. In: Proc. of {ICLR} 2016 (Feb 2017), \url{https://openreview.net/forum?id=HJGU3Rodl}

\bibitem{kwiatkowska_stochastic_2007}
Kwiatkowska, M., Norman, G., Parker, D.: Stochastic {Model} {Checking}. In: Bernardo, M., Hillston, J. (eds.) Formal {Methods} for {Performance} {Evaluation}, pp. 220--270. No.~4486 in Lecture {Notes} in {Computer} {Science}, Springer Berlin Heidelberg (2007), \url{http://link.springer.com/chapter/10.1007/978-3-540-72522-0_6}

\bibitem{kwiatkowska_prism_2011}
Kwiatkowska, M., Norman, G., Parker, D.: {PRISM} 4.0: {Verification} of {Probabilistic} {Real}-{Time} {Systems}. In: Gopalakrishnan, G., Qadeer, S. (eds.) Computer {Aided} {Verification}. pp. 585--591. Lecture {Notes} in {Computer} {Science}, Springer, Berlin, Heidelberg (2011)

\bibitem{kwiatkowska_probabilistic_2022}
Kwiatkowska, M., Norman, G., Parker, D.: Probabilistic {Model} {Checking} and {Autonomy}. Annual Review of Control, Robotics, and Autonomous Systems  \textbf{5}(1),  385--410 (2022). \doi{10.1146/annurev-control-042820-010947}, \url{https://doi.org/10.1146/annurev-control-042820-010947}, \_eprint: https://doi.org/10.1146/annurev-control-042820-010947

\bibitem{kwiatkowska_assume-guarantee_2010}
Kwiatkowska, M., Norman, G., Parker, D., Qu, H.: Assume-{Guarantee} verification for probabilistic systems. In: Proceedings of the 16th international conference on {Tools} and {Algorithms} for the {Construction} and {Analysis} of {Systems}. pp. 23--37. {TACAS}'10, Springer-Verlag, Paphos, Cyprus (Mar 2010), \url{https://doi.org/10.1007/978-3-642-12002-2_3}

\bibitem{kwiatkowska_compositional_2013}
Kwiatkowska, M., Norman, G., Parker, D., Qu, H.: Compositional probabilistic verification through multi-objective model checking. Information and Computation  \textbf{232},  38--65 (Nov 2013)

\bibitem{kwiatkowska_when_2023}
Kwiatkowska, M., Zhang, X.: When to {Trust} {AI}: {Advances} and {Challenges} for {Certification} of {Neural} {Networks}. In: Annals of {Computer} {Science} and {Information} {Systems}. vol.~35, pp. 25--37 (2023), \url{https://annals-csis.org/Volume_35/drp/2324.html}, iSSN: 2300-5963

\bibitem{legay2014LSS}
Legay, A., Sedwards, S., Traonouez, L.M.: Scalable verification of markov decision processes. In: International Conference on Software Engineering and Formal Methods. pp. 350--362. Springer (2014)

\bibitem{li_stochastic_2017}
Li, J., Nuzzo, P., Sangiovanni-Vincentelli, A., Xi, Y., Li, D.: Stochastic contracts for cyber-physical system design under probabilistic requirements. In: Proceedings of the 15th {ACM}-{IEEE} {International} {Conference} on {Formal} {Methods} and {Models} for {System} {Design}. pp. 5--14. {MEMOCODE} '17, Association for Computing Machinery, New York, NY, USA (Sep 2017). \doi{10.1145/3127041.3127045}, \url{https://doi.org/10.1145/3127041.3127045}

\bibitem{LiL4DC25}
Li, Y., Ji, C., Anchalia, J., Mitra, S.: Lyapunov perception contracts. In: In Proceedings of Learning for Decision and Control (L4DC) (2025)

\bibitem{liebenwein_compositional_2020}
Liebenwein, L., Schwarting, W., Vasile, C.I., DeCastro, J., Alonso-Mora, J., Karaman, S., Rus, D.: Compositional and {Contract}-{Based} {Verification} for {Autonomous} {Driving} on {Road} {Networks}. In: Amato, N.M., Hager, G., Thomas, S., Torres-Torriti, M. (eds.) Robotics {Research}. pp. 163--181. Springer {Proceedings} in {Advanced} {Robotics}, Springer International Publishing, Cham (2020). \doi{10.1007/978-3-030-28619-4_18}

\bibitem{lomuscio_counter_2019}
Lomuscio, A., Pirovano, E.: A {Counter} {Abstraction} {Technique} for the {Verification} of {Probabilistic} {Swarm} {Systems}. In: Proceedings of the 18th {International} {Conference} on {Autonomous} {Agents} and {MultiAgent} {Systems}. pp. 161--169. {AAMAS} '19, International Foundation for Autonomous Agents and Multiagent Systems, Richland, SC (May 2019)

\bibitem{majumdar_how_2020}
Majumdar, A., Pavone, M.: How {Should} a {Robot} {Assess} {Risk}? {Towards} an {Axiomatic} {Theory} of {Risk} in {Robotics}. In: Robotics {Research}. pp. 75--84. Springer {Proceedings} in {Advanced} {Robotics}, Springer International Publishing, Cham (2020). \doi{10.1007/978-3-030-28619-4_10}

\bibitem{mallik_compositional_2018}
Mallik, K., Schmuck, A.K., Soudjani, S., Majumdar, R.: Compositional synthesis of finite-state abstractions. IEEE Transactions on Automatic Control  \textbf{64}(6),  2629--2636 (2018), publisher: IEEE

\bibitem{miao2025falconwingopensourceplatformultralight}
Miao, Y., Shen, W., Cui, H., Mitra, S.: Falconwing: An open-source platform for ultra-light fixed-wing aircraft research (2025), \url{https://arxiv.org/abs/2505.01383}

\bibitem{Miao:IROS2025}
Miao, Y., Shen, W., Mitra, S.: Zero-shot sim-to-real visual quadrotor control with hard constraints. In: IEEE/RSJ International Conference on Intelligent Robots and Systems. Hangzhou, China (October 2025)

\bibitem{10.1145/3503250}
Mildenhall, B., Srinivasan, P.P., Tancik, M., Barron, J.T., Ramamoorthi, R., Ng, R.: Nerf: representing scenes as neural radiance fields for view synthesis. Commun. ACM  \textbf{65}(1),  99–106 (Dec 2021). \doi{10.1145/3503250}, \url{https://doi.org/10.1145/3503250}

\bibitem{mitra_formal_2025}
Mitra, S., Păsăreanu, C., Prabhakar, P., Seshia, S.A., Mangal, R., Li, Y., Watson, C., Gopinath, D., Yu, H.: Formal {Verification} {Techniques} for {Vision}-{Based} {Autonomous} {Systems} – {A} {Survey}. In: Jansen, N., Junges, S., Kaminski, B.L., Matheja, C., Noll, T., Quatmann, T., Stoelinga, M., Volk, M. (eds.) Principles of {Verification}: {Cycling} the {Probabilistic} {Landscape} : {Essays} {Dedicated} to {Joost}-{Pieter} {Katoen} on the {Occasion} of {His} 60th {Birthday}, {Part} {III}, pp. 89--108. Springer Nature Switzerland, Cham (2025). \doi{10.1007/978-3-031-75778-5_5}, \url{https://doi.org/10.1007/978-3-031-75778-5_5}

\bibitem{moss_bayesian_2023}
Moss, R.J., Kochenderfer, M.J., Gariel, M., Dubois, A.: Bayesian {Safety} {Validation} for {Black}-{Box} {Systems}. In: Conference proceedings of the 2023 {AIAA} {AVIATION} {Forum} (May 2023). \doi{10.48550/arXiv.2305.02449}, \url{http://arxiv.org/abs/2305.02449}, arXiv:2305.02449 [cs, stat]

\bibitem{naik_robustness_2020}
Naik, N., Nuzzo, P.: Robustness {Contracts} for {Scalable} {Verification} of {Neural} {Network}-{Enabled} {Cyber}-{Physical} {Systems}. In: 2020 18th {ACM}-{IEEE} {International} {Conference} on {Formal} {Methods} and {Models} for {System} {Design} ({MEMOCODE}). pp. 1--12 (Dec 2020). \doi{10.1109/MEMOCODE51338.2020.9315118}

\bibitem{newcombe_two-sided_1998}
Newcombe, R.G.: Two-sided confidence intervals for the single proportion: comparison of seven methods. Statistics in Medicine  \textbf{17}(8),  857--872 (Apr 1998). \doi{10.1002/(sici)1097-0258(19980430)17:8<857::aid-sim777>3.0.co;2-e}

\bibitem{nuzzo_electronic_2019}
Nuzzo, P.: From {Electronic} {Design} {Automation} to {Cyber}-{Physical} {System} {Design} {Automation}: {A} {Tale} of {Platforms} and {Contracts}. In: Proceedings of the 2019 {International} {Symposium} on {Physical} {Design}. pp. 117--121. {ISPD} '19, Association for Computing Machinery, San Francisco, CA, USA (Apr 2019). \doi{10.1145/3299902.3311070}, \url{https://doi.org/10.1145/3299902.3311070}

\bibitem{nuzzo_stochastic_2019}
Nuzzo, P., Li, J., Sangiovanni-Vincentelli, A.L., Xi, Y., Li, D.: Stochastic {Assume}-{Guarantee} {Contracts} for {Cyber}-{Physical} {System} {Design}. ACM Transactions on Embedded Computing Systems  \textbf{18}(1),  2:1--2:26 (Jan 2019). \doi{10.1145/3243216}, \url{https://doi.org/10.1145/3243216}

\bibitem{puggelli_polynomial-time_2013}
Puggelli, A., Li, W., Sangiovanni-Vincentelli, A.L., Seshia, S.A.: Polynomial-{Time} {Verification} of {PCTL} {Properties} of {MDPs} with {Convex} {Uncertainties}. In: Computer {Aided} {Verification}. pp. 527--542. Lecture {Notes} in {Computer} {Science}, Springer, Berlin, Heidelberg (2013). \doi{10.1007/978-3-642-39799-8_35}

\bibitem{pasareanu_closed-loop_2023}
Păsăreanu, C.S., Mangal, R., Gopinath, D., Getir~Yaman, S., Imrie, C., Calinescu, R., Yu, H.: Closed-{Loop} {Analysis} of {Vision}-{Based} {Autonomous} {Systems}: {A} {Case} {Study}. In: Computer {Aided} {Verification}. pp. 289--303. Lecture {Notes} in {Computer} {Science}, Springer Nature Switzerland, Cham (2023). \doi{10.1007/978-3-031-37706-8_15}

\bibitem{ruchkin_compositional_2020}
Ruchkin, I., Sokolsky, O., Weimer, J., Hedaoo, T., Lee, I.: Compositional {Probabilistic} {Analysis} of {Temporal} {Properties} {Over} {Stochastic} {Detectors}. IEEE Tran. on Computer-Aided Design of Integrated Circuits and Systems  \textbf{39}(11),  3288--3299 (Nov 2020). \doi{10.1109/TCAD.2020.3012643}

\bibitem{ruchkin_integration_2019}
Ruchkin, I.: Integration of {Modeling} {Methods} for {Cyber}-{Physical} {Systems}. {PhD} {Thesis}, Carnegie Mellon University (Mar 2019), \url{https://doi.org/10.1184/R1/7970222.v1}

\bibitem{ruchkin_confidence_2022}
Ruchkin, I., Cleaveland, M., Ivanov, R., Lu, P., Carpenter, T., Sokolsky, O., Lee, I.: Confidence {Composition} for {Monitors} of {Verification} {Assumptions}. In: {ACM}/{IEEE} 13th {Intl}. {Conf}. on {Cyber}-{Physical} {Systems} ({ICCPS}). pp. 1--12 (May 2022). \doi{10.1109/ICCPS54341.2022.00007}

\bibitem{ruchkin_active:_2014}
Ruchkin, I., de~Niz, D., Chaki, S., Garlan, D.: {ACTIVE}: {A} {Tool} for {Integrating} {Analysis} {Contracts}. In: 5th {Analytic} {Virtual} {Integration} of {Cyber}-{Physical} {Systems} {Workshop}. Rome, Italy (Dec 2014)

\bibitem{ruchkin_contract-based_2014}
Ruchkin, I., de~Niz, D., Chaki, S., Garlan, D.: Contract-based {Integration} of {Cyber}-physical {Analyses}. In: Proc. of the {Intl}. {Conf}. on {Embedded} {Software} ({EMSOFT}). ACM, New York, NY, USA (2014). \doi{10.1145/2656045.2656052}

\bibitem{rushby_composing_2012}
Rushby, J.: Composing {Safe} {Systems}. In: Arbab, F., Ölveczky, P.C. (eds.) Formal {Aspects} of {Component} {Software}, pp. 3--11. No.~7253 in Lecture {Notes} in {Computer} {Science}, Springer Berlin Heidelberg (Jan 2012), \url{http://link.springer.com/chapter/10.1007/978-3-642-35743-5_2}

\bibitem{sangiovanni-vincentelli_taming_2012}
Sangiovanni-Vincentelli, A., Damm, W., Passerone, R.: Taming {Dr}. {Frankenstein}: {Contract}-{Based} {Design} for {Cyber}-{Physical} {Systems}. European Journal of Control  \textbf{18}(3),  217--238 (2012). \doi{10.3166/ejc.18.217-238}

\bibitem{santa_cruz_nnlander-verif_2022}
Santa~Cruz, U., Shoukry, Y.: {NNLander}-{VeriF}: {A} {Neural} {Network} {Formal} {Verification} {Framework} for {Vision}-{Based} {Autonomous} {Aircraft} {Landing}. In: Deshmukh, J.V., Havelund, K., Perez, I. (eds.) {NASA} {Formal} {Methods}. pp. 213--230. Lecture {Notes} in {Computer} {Science}, Springer International Publishing, Cham (2022)

\bibitem{shvets_automatic_2018}
Shvets, A.A., Rakhlin, A., Kalinin, A.A., Iglovikov, V.I.: Automatic {Instrument} {Segmentation} in {Robot}-{Assisted} {Surgery} using {Deep} {Learning}. In: 2018 17th {IEEE} {International} {Conference} on {Machine} {Learning} and {Applications} ({ICMLA}). pp. 624--628 (Dec 2018). \doi{10.1109/ICMLA.2018.00100}, \url{https://ieeexplore.ieee.org/document/8614125}

\bibitem{sun_learning-based_2024}
Sun, D., Yang, B.C., Mitra, S.: Learning-based inverse perception contracts and applications. In: 2024 IEEE International Conference on Robotics and Automation (ICRA). pp. 11612--11618 (2024). \doi{10.1109/ICRA57147.2024.10610329}

\bibitem{szegedy_intriguing_2014}
Szegedy, C., Zaremba, W., Sutskever, I., Bruna, J., Erhan, D., Goodfellow, I., Fergus, R.: Intriguing properties of neural networks. In: International {Conference} on {Learning} {Representations} (2014)

\bibitem{termine_robust_2021}
Termine, A., Antonucci, A., Facchini, A., Primiero, G.: Robust {Model} {Checking} with {Imprecise} {Markov} {Reward} {Models}. In: Proceedings of the {Twelveth} {International} {Symposium} on {Imprecise} {Probability}: {Theories} and {Applications}. pp. 299--309. PMLR (Aug 2021), \url{https://proceedings.mlr.press/v147/termine21a.html}, iSSN: 2640-3498

\bibitem{troffaes_model_2013}
Troffaes, M.C.M., Skulj, D.: Model checking for imprecise {Markov} chains. In: Proceedings of the eighth international symposium on imprecise probability : theories and applications. pp. 337--344. Society for Imprecise Probability: Theories and Applications (SIPTA), Compiegne, France (Jul 2013), \url{http://www.sipta.org/isipta13/index.php?id=paper&paper=034.html}, conference Name: ISIPTA'13: Proceedings of the Eighth International Symposium on Imprecise Probability: Theories and Applications Meeting Name: ISIPTA'13: Proceedings of the Eighth International Symposium on Imprecise Probability: Theories and Applications

\bibitem{vazquez-chanlatte_model_2019}
Vazquez-Chanlatte, M., Rabe, M.N., Seshia, S.A.: A {Model} {Counter}'s {Guide} to {Probabilistic} {Systems}. arXiv:1903.09354 [cs]  (Mar 2019), \url{http://arxiv.org/abs/1903.09354}, arXiv: 1903.09354

\bibitem{waite_state-dependent_2025}
Waite, T., Geng, Y., Turnquist, T., Ruchkin, I., Ivanov, R.: State-{Dependent} {Conformal} {Perception} {Bounds} for {Neuro}-{Symbolic} {Verification} of {Autonomous} {Systems} (Feb 2025). \doi{10.48550/arXiv.2502.21308}, \url{http://arxiv.org/abs/2502.21308}, arXiv:2502.21308 [eess]

\bibitem{wang_bounding_2021}
Wang, Z., Huang, C., Wang, Y., Hobbs, C., Chakraborty, S., Zhu, Q.: Bounding {Perception} {Neural} {Network} {Uncertainty} for {Safe} {Control} of {Autonomous} {Systems}. In: 2021 {Design}, {Automation} \& {Test} in {Europe} {Conference} \& {Exhibition} ({DATE}). pp. 1745--1750 (Feb 2021). \doi{10.23919/DATE51398.2021.9474204}, iSSN: 1558-1101

\bibitem{wolff_robust_2012}
Wolff, E.M., Topcu, U., Murray, R.M.: Robust control of uncertain {Markov} {Decision} {Processes} with temporal logic specifications. In: 2012 {IEEE} 51st {IEEE} {Conference} on {Decision} and {Control} ({CDC}). pp. 3372--3379 (Dec 2012). \doi{10.1109/CDC.2012.6426174}, iSSN: 0743-1546

\bibitem{wu_toward_2023}
Wu, H., Tagomori, T., Robey, A., Yang, F., Matni, N., Pappas, G., Hassani, H., Pasareanu, C., Barrett, C.: Toward {Certified} {Robustness} {Against} {Real}-{World} {Distribution} {Shifts}. 2023 IEEE Conference on Secure and Trustworthy Machine Learning (SaTML) pp. 537--553 (Feb 2023). \doi{10.1109/SaTML54575.2023.00042}, \url{https://ieeexplore.ieee.org/document/10136136/}, conference Name: 2023 IEEE Conference on Secure and Trustworthy Machine Learning (SaTML) ISBN: 9781665462990 Place: Raleigh, NC, USA Publisher: IEEE

\bibitem{xie_mosaic_2023}
Xie, X., Song, J., Zhou, Z., Zhang, F., Ma, L.: Mosaic: {Model}-based {Safety} {Analysis} {Framework} for {AI}-enabled {Cyber}-{Physical} {Systems} (May 2023), \url{http://arxiv.org/abs/2305.03882}, arXiv:2305.03882 [cs]

\bibitem{xie2024vid2sim}
Xie, Z., Liu, Z., Peng, Z., Wu, W., Zhou, B.: Vid2sim: Realistic and interactive simulation from video for urban navigation. CVPR  (2025)

\bibitem{robosplat}
Yang, S., Yu, W., Zeng, J., Lv, J., Ren, K., Lu, C., Lin, D., Pang, J.: Novel demonstration generation with gaussian splatting enables robust one-shot manipulation. arXiv preprint arXiv:2504.13175  (2025)

\bibitem{zhao_bayesian_2024}
Zhao, X., Gerasimou, S., Calinescu, R., Imrie, C., Robu, V., Flynn, D.: Bayesian learning for the robust verification of autonomous robots. Communications Engineering  \textbf{3}(1),  1--14 (Jan 2024). \doi{10.1038/s44172-024-00162-y}, \url{https://www.nature.com/articles/s44172-024-00162-y}, number: 1 Publisher: Nature Publishing Group

\end{thebibliography}

\clearpage
\appendix 
\section{Appendix}

\setcounter{theorem}{0}
\subsection{Proof of Theorem 1}\label{app:proof-id}
\begin{theorem} [In-distribution soundness and safety]
Let $M_E$ be a concrete system, living in the state/estimation spaces $S$ and $\hat{S}$, composed of deterministic components $g$, $h$, $\pi$, and $f$, subject to stochasticity from the latent environment distribution $E$, from which originates training data $\mathcal{D}^{train}$ consisting of random trajectories $\tau$ drawn i.i.d. from $\sem{M_E}$; let $\psi$ be an abstraction function; let $\mathcal{M}_E$ be an \emph{ $\texttt{IMDP} = \texttt{DynStruct}(\pi, f, S, \hat{S}, \psi)$} parameterized by \emph{$\Delta = \texttt{ConfInt}\left(\mathcal{D}^{train}, \psi, \alpha \right)$}, where $\alpha$ is a soundness confidence level.
    
    If it holds that:
    \begin{itemize}
        \item $\mathcal{M}_E$ is a statistically $\alpha$-sound abstraction of $M_E$ per Def.~\ref{def:stat-sound-absteaction}. 
        \item $\mathcal{M}_E$ is safe for LTL safety property $\varphi$ with probability $1-\beta$ per Def.~\ref{def:prob-safe}. 
    \end{itemize}

    then:
    $$
    \Pr\!_{\mathcal{D}^{train}} \left[\Pr\!_{\tau \sim \sem{M_E}} \left( \tau \vDash \varphi \right) \ge 1-\beta \right] \ge 1-\alpha
    $$
\end{theorem}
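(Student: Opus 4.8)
The plan is to read the statement as a one-step chaining of the two hypotheses through a single ``good event'' on the draw of the training set, after which the abstract safety bound transfers verbatim to the concrete system. The frequentist outer probability will be carried entirely by the soundness hypothesis, while the inner bound becomes a deterministic consequence on that good event.

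First I would fix the event
$$A := \left\{ \mathcal{D}^{train} : \sem{\psi(M_E)} \in \sem{\mathcal{M}_E} \right\},$$
i.e. the event that the sampled dataset yields a genuinely sound abstraction. By the first hypothesis ($\mathcal{M}_E$ is statistically $\alpha$-sound per Def.~\ref{def:stat-sound-absteaction}) we have $\Pr\!_{\mathcal{D}^{train}}(A) \ge 1-\alpha$ directly. This disposes of the outer probability and reduces the theorem to the deterministic implication: whenever $A$ holds, the event $\{\Pr\!_{\tau \sim \sem{M_E}}(\tau \vDash \varphi) \ge 1-\beta\}$ holds as well.

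Next I would establish that implication. Conditioning on $A$, the concrete system's abstract-trajectory distribution $\sem{\psi(M_E)}$ is a member of the set $\sem{\mathcal{M}_E}$. The second hypothesis (probabilistic safety per Def.~\ref{def:prob-safe}, read as the worst-case satisfaction probability $\Pr\!^{min}_{\sem{\mathcal{M}_E}}(\varphi) \ge 1-\beta$ returned by the model checker) guarantees that \emph{every} distribution in $\sem{\mathcal{M}_E}$ assigns $\varphi$ a satisfaction probability of at least $1-\beta$. Since $\sem{\psi(M_E)}$ is one such member, I obtain
$$\Pr\!_{\tau^\psi \sim \sem{\psi(M_E)}}\!\left(\tau^\psi \vDash \varphi\right) \ge \Pr\!^{min}_{\sem{\mathcal{M}_E}}(\varphi) \ge 1-\beta.$$

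The remaining step, which I expect to be the main obstacle, is to transfer this bound from the pushforward $\sem{\psi(M_E)}$ on abstract trajectories to the concrete distribution $\sem{M_E}$. Writing the pushforward explicitly, $\Pr\!_{\tau^\psi \sim \sem{\psi(M_E)}}(\tau^\psi \vDash \varphi) = \Pr\!_{\tau \sim \sem{M_E}}(\psi(\tau) \vDash \varphi)$, so it suffices to show $\{\tau : \psi(\tau) \vDash \varphi\} \subseteq \{\tau : \tau \vDash \varphi\}$, i.e. that satisfaction of the co-safe property survives concretization. Here I would appeal to the soundness of the \texttt{DynStruct} construction: it turns $f \circ \pi$ into a surjective, behavior-enlarging map on the tiles, so the abstract satisfaction probability is conservative (it lower-bounds the concrete one) under the agreed tile labeling. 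Granting this property-preservation, $\Pr\!_{\tau \sim \sem{M_E}}(\tau \vDash \varphi) \ge 1-\beta$ on $A$, and monotonicity of probability then gives
$$\Pr\!_{\mathcal{D}^{train}}\!\left[\Pr\!_{\tau \sim \sem{M_E}}(\tau \vDash \varphi) \ge 1-\beta\right] \ge \Pr\!_{\mathcal{D}^{train}}(A) \ge 1-\alpha,$$
as claimed. The only delicate point is this property-preservation of the abstraction; the probabilistic bookkeeping is immediate once it is in place.
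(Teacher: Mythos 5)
Your proposal is correct and follows essentially the same route as the paper's proof: fix the soundness event on the draw of $\mathcal{D}^{train}$, use the worst-case safety bound of the IMDP together with the fact that abstract satisfaction implies concrete satisfaction ($\psi(\tau) \vDash \varphi \implies \tau \vDash \varphi$), then finish with probability bookkeeping (you use monotonicity of measure where the paper writes out the law of total probability, which amounts to the same step). The one point you flag as delicate—property preservation under concretization—is asserted without further argument in the paper's proof as well, so your treatment is, if anything, slightly more explicit about the worst-case reading of Def.~\ref{def:prob-safe} and the pushforward identity.
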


\begin{proof}
    For convenience, we define the events:
    \begin{itemize}
        \item $A \triangleq \sem{\psi(M_E)}\in\sem{\mathcal{M}_E}$ is the event that the distribution over concrete trajectories lives within the IMDP (\emph{statistical soundness} Def.~\ref{def:stat-sound-absteaction}).
        \item $B \triangleq \tau \vDash \varphi \mid \tau \sim \sem{\mathcal{M}_E}$ is the event that an execution of the IMDP $\mathcal{M}_E$ satisfies the LTL safety property $\varphi$ (\emph{probabilistic safety} Def.~\ref{def:prob-safe}).
        \item $C \triangleq \tau \vDash \varphi \mid \tau \sim \sem{M_E}$ is the event that an execution $M_E$ satisfies the LTL safety property $\varphi$.
    \end{itemize}

    \noindent
    Suppose $A$ occurs. By over-approximation, $\tau \sim \sem{M_E}$ admits an abstract representative $\psi(\tau) \sim \sem{\mathcal{M}_E}$. Moreover, if $\psi(\tau) \vDash \varphi$ then the original trajectory $\tau$ must satisfy $\varphi$ as well. Hence:
    $$A \implies (B \implies C)$$

    \noindent
    By Definition~\ref{def:prob-safe}, we know that $\Pr(B) \ge 1-\beta$. Then we know that $\tau$ is \emph{at least} as safe as $\psi(\tau)$:
    $$\Pr(C) = \Pr\!_{\tau \sim \sem{M_E}} \left( \tau \vDash \varphi \right) \ge \Pr\!_{\psi(\tau) \sim \sem{\mathcal{M}_E}}(\psi(\tau) \vDash \varphi) \ge 1-\beta$$

    \noindent
    Furthermore:
    $$\Pr \left [  \Pr(C) \ge 1-\beta   \mid A \right] = 1$$
    By the Law of Total Probability:
    $$\Pr \left [  \Pr(C) \ge 1-\beta \right] = \Pr \left [  \Pr(C) \ge 1-\beta   \mid A \right] \Pr[A] + \Pr \left [  \Pr(C) \ge 1-\beta   \mid A^c \right] \Pr[A^c] $$
    $$\Pr \left [  \Pr(C) \ge 1-\beta \right] \ge (1) (1 - \alpha) + (0)(\alpha) $$
    Which reduces to the desired PAC guarantee:
    $$\Pr \left [  \Pr(C) \ge 1-\beta \right] \ge 1 - \alpha$$
    
\end{proof}

\subsection{Proof of Theorem 2}\label{app:proof-ood}

\begin{theorem} [Out-of-distribution validity and safety]
    Let $M_{E'}$ be a concrete system subject to stochasticity from the latent environment distribution $E'$, from which originates training data $\mathcal{D}^{val}$ consisting of random trajectories $\tau$ drawn i.i.d. from $\sem{M_{E'}}$; let $\psi$ be an abstraction function; let $\mathcal{M}_E$ be an IMDP abstraction of the same system under $E$.
    
    If the following is true: 
    \begin{itemize}
        \item $\mathcal{M}_E$ is a statistically $\gamma$-valid abstraction of $M_{E'}$ with confidence $\gamma$  per Def.~\ref{def:prob-valid}. 
        \item $\mathcal{M}_E$ is safe for LTL safety property $\varphi$ with probability $1-\beta$ per Def.~\ref{def:prob-safe}. 
    \end{itemize}
    Then: 
    $$\Pr\!_{\mathcal{D}^{val}} \left[\Pr\!_{\tau \sim \sem{M_E}} \left( \tau \vDash \varphi \right) \ge 1-\beta \right] \ge 1-\gamma$$
\end{theorem}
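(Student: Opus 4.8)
The plan is to follow the same skeleton as the proof of Theorem~\ref{thm:id-safety}, replacing the frequentist soundness event with the Bayesian validity event and tracking the outer probability as a posterior belief rather than a confidence over dataset resampling. First I would name three events: let $V \triangleq \sem{M_{E'}} \in \sem{\mathcal{M}_E}$ be the validity event that the true trajectory distribution in the new environment lies inside the IMDP's set of distributions; let $B \triangleq (\tau \vDash \varphi \mid \tau \sim \sem{\mathcal{M}_E})$ be the event that an IMDP execution satisfies $\varphi$; and let $C' \triangleq (\tau \vDash \varphi \mid \tau \sim \sem{M_{E'}})$ be the event that a concrete execution of $M_{E'}$ satisfies $\varphi$. (I read the conclusion's $\sem{M_E}$ as $\sem{M_{E'}}$, the new-environment system, which is the only interpretation consistent with the hypotheses.)

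The crux is the deterministic implication $V \implies (B \implies C')$, which rests on the over-approximation property of the abstraction exactly as in Theorem~\ref{thm:id-safety}: whenever $\sem{M_{E'}}$ is contained in $\sem{\mathcal{M}_E}$, every concrete trajectory of $M_{E'}$ admits an abstract representative whose satisfaction of $\varphi$ forces satisfaction in the concrete system. By monotonicity of the satisfaction probability under this containment, together with Definition~\ref{def:prob-safe} (which gives $\Pr(B) \ge 1-\beta$, since the worst-case distribution inside $\mathcal{M}_E$ is already resolved by the model checker), I would conclude $\Pr(C') = \Pr_{\tau \sim \sem{M_{E'}}}(\tau \vDash \varphi) \ge \Pr(B) \ge 1-\beta$ on the event $V$, that is, $\Pr[\Pr(C') \ge 1-\beta \mid V] = 1$.

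With this conditional certainty in hand, the final step is the law of total probability, splitting on $V$ and its complement. The validity definition (Def.~\ref{def:prob-valid}) supplies $\Pr(V) \ge 1-\gamma$ as a posterior quantity, so $\Pr[\Pr(C') \ge 1-\beta] \ge (1)(1-\gamma) + (0)(\gamma) = 1-\gamma$, matching the stated bound. Everything below the implication chain is then purely routine probabilistic bookkeeping identical in form to the first theorem.

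The main obstacle I anticipate is not the implication chain — that is essentially identical to Theorem~\ref{thm:id-safety} — but reconciling the meaning of the outer operator $\Pr_{\mathcal{D}^{val}}$ with the Bayesian validity confidence $1-\gamma$. In Theorem~\ref{thm:id-safety} the outer probability is genuinely frequentist, taken over resamplings of $\mathcal{D}^{train}$, and Clopper--Pearson coverage directly yields $\Pr(A) \ge 1-\alpha$. Here, by contrast, $1-\gamma$ is a posterior mass $\Pr_{\mathbb{M}^{po}}(\sem{M} \in \sem{\mathcal{M}_E})$ computed for a fixed $\mathcal{D}^{val}$, so the event $V$ must be treated as a random variable under the posterior belief $\mathbb{M}^{po}$ rather than under dataset resampling. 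I would therefore be careful to state explicitly that the outer probability in the conclusion denotes the posterior belief conditioned on the observed $\mathcal{D}^{val}$, and that the law of total probability is applied in the posterior measure $\mathbb{M}^{po}$; conflating the frequentist and Bayesian semantics of this outer probability is the one subtle point the argument must not gloss over.
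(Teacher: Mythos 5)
Your proof is correct and follows essentially the same route as the paper's: the same three-event decomposition (your $V$, $B$, $C'$ match the paper's $A$, $B$, $C$), the same over-approximation implication $V \implies (B \implies C')$, and the same application of the law of total probability to obtain the $1-\gamma$ bound. In fact, your closing remark about reconciling the outer $\Pr_{\mathcal{D}^{val}}$ with the posterior measure $\mathbb{M}^{po}$ (and your reading of $\sem{M_E}$ as $\sem{M_{E'}}$ in the conclusion) is more careful than the paper's own proof, which silently treats the validity event under the posterior while writing the outer probability over $\mathcal{D}^{val}$.
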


\begin{proof}

    For convenience, we define the following events:
    \begin{itemize}
        \item $A \triangleq \sem{M} \in \sem{\mathcal{M}_E} \mid M \sim \mathbb{M}^{po}$ is the event that a model in our posterior distribution over models exists within the IMDP (\emph{statistical validity} Def.~\ref{def:prob-valid}).
        \item $B \triangleq \tau \vDash \varphi \mid \tau \sim \sem{\mathcal{M}_E}$ is the event that an execution of the IMDP $\mathcal{M}_E$ satisfies the LTL safety property $\varphi$ (\emph{probabilistic safety} Def.~\ref{def:prob-safe}).
        \item $C \triangleq \tau \vDash \varphi \mid \tau \sim \sem{M_{E'}}$ is the event that an execution of $M_{E'}$ satisfies the LTL safety property $\varphi$.
    \end{itemize}
    \noindent
    Suppose $A$ occurs. By over-approximation, $\tau \sim \sem{M_{E'}}$ admits an abstract representative $\psi(\tau) \sim \sem{\mathcal{M}_E}$. Moreover, if $\psi(\tau) \vDash \varphi$ then the original trajectory $\tau$ must satisfy $\varphi$ as well. Hence:
    $$A \implies (B \implies C)$$

    \noindent
    By Definition~\ref{def:prob-safe}, we know that $\Pr(B) \ge 1-\beta$. Then we know that $\tau$ is \emph{at least} as safe as $\psi(\tau)$:
    $$\Pr(C) = \Pr\!_{\tau \sim \sem{M_E}} \left( \tau \vDash \varphi \right) \ge \Pr\!_{\psi(\tau) \sim \sem{\mathcal{M}_E}}(\psi(\tau) \vDash \varphi) \ge 1-\beta$$

    \noindent
    Furthermore:
    $$\Pr \left [  \Pr(C) \ge 1-\beta   \mid A \right] = 1$$
    By the Law of Total Probability:
    $$\Pr \left [  \Pr(C) \ge 1-\beta \right] = \Pr \left [  \Pr(C) \ge 1-\beta   \mid A \right] \Pr[A] + \Pr \left [  \Pr(C) \ge 1-\beta   \mid A^c \right] \Pr[A^c] $$
    $$\Pr \left [  \Pr(C) \ge 1-\beta \right] \ge (1) (1 - \gamma) + (0)(\gamma) $$
    Which reduces to the desired PAC guarantee:
    $$\Pr \left [  \Pr(C) \ge 1-\beta \right] \ge 1 - \gamma$$

\end{proof}

\end{document}